\newtheorem{remark}{Remark}[section]
\newtheorem{cor}{Corollary}[section]
\newtheorem{prop}{Proposition}[section]
\numberwithin{equation}{section}
\newcommand{\E}{\mathbb{E}}
\newcommand{\eps}{\epsilon}
\renewcommand{\P}{\mathbb{P}}
\renewcommand{\H}{\mathcal{H}}
\newcommand{\F}{\mathcal{F}}
\newcommand{\la}{\lambda}
\newcommand{\G}{\mathcal{G}}
\renewcommand{\L}{\mathcal{L}}
\newcommand{\M}{\mathcal{M}}
\title{\LARGE \bf
\sc A Unified Framework for Pricing Credit and Equity Derivatives \thanks{Department of Mathematics, University
of Michigan, Ann Arbor, MI 48109, USA, e-mail:\{erhan,boy\}@umich.edu.} \thanks{We would like to thank the two anonymous referees and the anonymous AE for their constructive comments, which helped us improve our paper.} }
\author{Erhan Bayraktar 
\thanks{This
work is supported in part by the National Science Foundation,
under grant DMS-060449.} \and Bo Yang}
\date{}
\begin{document}
\maketitle

\begin{abstract}
We propose a model which can be jointly calibrated to the corporate bond term structure and equity option volatility surface of the same company. Our purpose is to obtain explicit bond and equity option pricing formulas that can be calibrated 
to find a risk neutral model that matches a set of observed market prices. This risk neutral model can then be used to price more exotic, illiquid or over-the-counter derivatives. 
We observe that the model implied credit default swap (CDS) spread matches the market CDS spread and that our model produces a very desirable CDS spread term structure. This is observation is worth noticing since without calibrating any parameter to the CDS spread data, it is matched by the CDS spread that our model generates using the available information from the equity options and corporate bond markets.
We also observe that our model matches the equity option implied volatility surface well since we properly account for the default risk premium in the implied volatility surface.
We demonstrate the importance of accounting for the default risk and stochastic interest rate in equity option pricing by comparing our results to \cite{ronnie-timescale}, which only accounts for stochastic volatility. 

\noindent \textbf{Keywords:} Credit Default Swap, Defaultable Bond, Defaultable Stock, Equity Options, Stochastic Interest Rate, Implied Volatility, Multiscale Perturbation Method.
\end{abstract}

\tableofcontents

\section{Introduction}

Our purpose is to build an intensity-based modeling framework that can be used in trading and calibrating across the credit and equity markets. The same company has  stocks, stock options, bonds, credit default swaps on these bonds, and several other derivatives. When this company defaults, the payoffs of all of these instruments are affected; therefore, their prices all contain information about the default risk of the company. 

We build a model that can be jointly calibrated to corporate bond prices and stock options, and can be used to price more exotic derivatives. In our framework we use the Vasicek model for the interest rate, and use doubly stochastic Poisson process to model the default of a given company. We assume that the bonds have recovery of market value and that stocks become valueless at the time of default.
Using the multi-scale modeling approach of \cite{ronnie-timescale}  we obtain explicit bond pricing equation with three free parameters which we calibrate to the corporate bond term structure. On the other hand, stock option pricing formula contain seven parameters, three of which are common with the bond option pricing formula. (The common parameters are multiplied with the loss rate in the bond pricing formula.) 
We calibrate the remaining set of parameters to the stock option prices. This hybrid model, therefore, is able to account for the default risk premium in the implied volatility surface.

The calibration results reveal that our model is able to produce implied volatility surfaces that match the data closely.
 We compare the implied volatility surfaces that our model produces to those of \cite{ronnie-timescale}. We see that even for longer maturities our model has a prominent skew: compare  Figures~\ref{fig:imp7p} and \ref{fig:impvfouque}. Even when we ignore the stochastic volatility effects, our model fits the implied volatility of the Ford Motor Company well and performs better than the model of \cite{ronnie-timescale}; see Figure~\ref{fig:impvolFord}. This points to the importance of accounting for the default risk for companies with low ratings. 

Once the model parameters are calibrated, the model can be used to compute the prices of more exotic options.
To test whether our model produces correct prices we use the CDS spread data and show that the model implied CDS spread matches the ``out of sample" CDS data.
To compute the CDS spread, under our assumption on the recovery, one needs to reconstruct the term structure of the treasury and the corporate bonds. Moreover, one needs to separate the loss rate from the other parameters in the bond pricing formula (see \eqref{eq:cds-form} or \eqref{eq:imcdssp} for the CDS spread formula). This separation is possible since we calibrate our model to corporate bond data and stock option data jointly as described above.  The model-implied CDS spread 
time series matches the observed CDS spread time series of Ford Motor Company for over a long period of time; see Figures~\ref{fig:cds3yr} and \ref{fig:cds5yr}. This is an interesting observation since we did not make use of the CDS spread data in our calibration. This observation also shows that one can use our model to trade across different markets that contain information about the default risk of a given firm.

Our model has three building blocks: (1) We model the default event using the multi-scale stochastic intensity model of \cite{papa}. We also model the interest rate using an Ornstein-Uhlenbeck process (Vasicek model). As it was demonstrated in \cite{papa}, these modeling assumptions are effective in capturing the corporate yield curve; (2) We take the stock price process to follow a stochastic volatility model which jumps to zero when the company defaults. This stock price model was considered in \cite{bayraktar2008}. Our model specification for the stock price differs from the jump to default models for the stock price considered by \cite{carr-linetsky} and \cite{Linetsky}, which take the volatility and the default intensity to be functions of the stock price; (3) We also account for the stochastic volatility in the modeling of the stocks since even the index options (when there is no risk of default) possess implied volatility skew.
We model the volatility using the fast scale stochastic volatility model of \cite{sircar}. We demonstrate on index options (when there is no risk of default) that (see Section~\ref{sec:demnst}), we match the performance of  the two time scale volatility model \cite{ronnie-timescale}.  
The latter model extends \cite{sircar} by including a slow factor in the volatility to get a better fit to longer maturity option. We see from Section~\ref{sec:demnst} that
when one assumes the interest rate to be stochastic, the calibration performance of the stochastic volatility model with only the fast factor is as good as the two scale stochastic volatility model, which is why we choose the volatility to be driven by only the fast factor.  Even though the interest rate is stochastic in our model, we are able to obtain explicit asymptotic pricing formulas for stock options. Thanks to these explicit pricing formulas the inverse problem that we face in calibrating to the corporate bond and stock data can be solved with considerable ease.  Our modeling framework can be thought of as a hybrid of the models of \cite{sircar}, which only considers pricing options in a stochastic volatility model with constant interest rate, and \cite{papa}, which only considers a framework for pricing derivatives on bonds. Neither of these models has the means to transfer information from the equity markets to bond market or vice versa, which we are set to do in this paper. We should also note that our model also takes input from the treasury yield curve, historical stock prices, and historical spot rate data to estimate some of its parameters (see Section~\ref{sec:calibration}). 

Our model extends \cite{bayraktar2008} by taking the interest rate process to be stochastic, which leads to a richer theory and more calibration parameters, and therefore, better fit to data: (i) When the interest rate is deterministic  the corporate bond pricing formula turns out to be very crude and does not fit the bond term structure well (compare (2.57) in \cite{bayraktar2008} and \eqref{eq:apprx-bndprice}); (ii) With deterministic interest rates the bond pricing and the stock option pricing formulas share only one common term, ``the average intensity of default'' (this parameter is multiplied by the loss rate in the bond pricing equation, under our loss assumptions). Therefore, the default premium in the implied volatility surface is not accounted for as much as it should be. And our calibration analysis demonstrates that this has a significant impact.
When the volatility is taken to be constant,
both our new model and the model in \cite{bayraktar2008} have three free parameters.
The model in  \cite{bayraktar2008} produces a below par fit to the implied volatility surface (see e.g. Figure 5 in that paper), whereas our model produces an excellent fit (see Section~\ref{sec:Fordsimplied} and Figure~\ref{fig:impvolFord}); (iii) To calculate the CDS spread, in the constant interest rate model, one needs to separate the loss rate and the average intensity of default. This is again established calibrating the model to the bond term structure data and the stock option implied volatility surface. The estimates for the average intensity and the loss rate are not as accurate in \cite{bayraktar2008} as it is in our model because of (i) and (ii). This crude estimation leads to a poor out of sample match to the CDS spread time series. 
  
The other defaultable stock models are those of  \cite{carr-linetsky}, \cite{Linetsky} and \cite{carr-wu}, which assume that the interest rate is deterministic.
\cite{carr-linetsky}, \cite{Linetsky} take the volatility and the intensity to be functions of the stock price and obtain a one-dimensional diffusion for the pre-default stock price evolution. Using the fact that the resolvents of particular Markov processes can be computed explicitly, they obtain pricing formulas for stock option prices. On the other hand \cite{carr-wu} 
uses a CIR stochastic volatility model and also models the intensity to be a function of the volatility and another endogenous CIR factor. The option prices in this framework are computed numerically using inverse the Fourier transform.
We, on the other hand, use asymptotic expansions to provide explicit pricing formulas for stock options in a framework that combines a) the Vasicek interest rate model, b) fast-mean reverting stochastic volatility model, c) defaultable stock price model, d) multi-scale stochastic intensity model.

Our calibration exercise differs from that of \cite{carr-wu} since they perform a time series analysis to obtain the parameters of the underlying factors (from the the stock option prices and credit default swap spread time series), whereas we calibrate our pricing parameters to the daily implied volatility surface and bond term structure data.  Our purpose is to find a risk neutral model that matches a set of observed market prices. This risk neutral model can then be used to price more exotic, illiquid or over-the-counter derivatives. For further discussion of this calibration methodology we refer to \cite{cont} (see Chapter 13), \cite{sircar}, \cite{ronnie-timescale} and \cite{papa}.
We also provide daily prediction of the CDS spread only using the data from the bond term structure and implied volatility surface of the options. 

The rest of the paper is organized as follows: In Section 2, we introduce our modeling framework and describe the credit and equity derivatives we will consider and obtain an expression for the CDS spread under the assumption that the recovery rate of a bond that defaults is a constant fraction of its predefault value. In Section 3, we introduce the asymptotic expansion method. We obtain explicit (asymptotic) prices for bonds and equity options in Section  3.3. In Section 4, we describe the calibration of our parameters and discuss our empirical results. Figures, which show our calibration results, are located after the references.

\section{A Framework for Pricing Equity and Credit Derivatives}
\subsection{The model} 

Let $(\Omega, \H, \P)$ be a complete probability space supporting (i) correlated standard Brownian motions $\vec{W}_t=(W_t^0,W_t^1,W_t^2,W_t^3,W_t^4)$, $t \geq 0$, 
with
\begin{equation}
\E[W^0_t,W^i_t]=\rho_i t, \quad
\E[W^i_t,W^j_t]=\rho_{ij} t,\quad i,j \ \in \{1,2,3,4\}, \; t \geq 0,
\end{equation}
for some constants $\rho_i, \rho_{ij} \in (-1,1)$, 
and (ii) a Poisson process $N$ independent of $\vec{W}$. 
Let us introduce the Cox process (time-changed Poisson process) $\tilde{N}_t \triangleq N(\int_0^{t}\la_s ds)$, $t \geq 0$, where
\begin{equation}
\begin{split}
\lambda_t&=f(Y_t,Z_t), \\
dY_t&=\frac{1}{\epsilon}(m-Y_t)dt+\frac{\nu\sqrt{2}}{\sqrt{\epsilon}}dW_t^2, \quad Y_0=y,\\
dZ_t&=\delta c(Z_t)dt+\sqrt{\delta}g(Z_t)dW_t^3, \quad Z_0=z,\\
\end{split}
\end{equation}
in which $\eps, \delta$ are (small) positive constants and $f$ is a strictly positive, bounded, smooth function. We also assume that the functions $c$ and $g$ satisfy Lipschitz continuity and growth conditions so that the diffusion process for $Z_t$ has a unique strong solution.
We model the time of default as
\begin{equation}
\tau= \inf\{t \geq 0: \tilde{N}_t=1\}.
\end{equation}
We also take interest rate to be stochastic and model it as an Ornstein-Uhlenbeck process
\begin{equation}
dr_t=(\alpha-\beta r_t)dt+\eta dW_t^1, \quad r_0=r,
\end{equation}
for positive constants $\alpha$, $\beta$, and $\eta$.

We model the stock price as the solution of 
the 
stochastic differential equation
\begin{equation}
d\bar{X}_t=\bar{X}_t\left(r_t dt+\sigma_t
dW^{0}_t-d\left(\tilde{N}_t-\int_0^{t \wedge \tau}\lambda_u
du\right)\right), \quad \bar{X}_0=x,
\end{equation}
where the volatility is stochastic and is defined through
\begin{equation}
\begin{split}
\sigma _t=\sigma (\tilde{Y}_t);\quad 
d\tilde{Y}_t=\left(\frac{1}{\epsilon}(\tilde{m}-\tilde{Y}_t)-\frac{\tilde{\nu}\sqrt{2}}{\sqrt{\epsilon}}\Lambda(\tilde{Y}_t)\right)dt+\frac{\tilde{\nu}\sqrt{2}}{\sqrt{\epsilon}}dW_t^4, \quad \tilde{Y}_0=\tilde{y}.
\end{split}
\end{equation}
Here, $\Lambda$ is a smooth, bounded function of one variable which represents the market price of volatility risk. The function $\sigma$ is also a bounded, smooth function. 
Note that the discounted stock price is a martingale under the measure
$\mathbb{P}$, and at the time of default, the stock price jumps down to zero. 
 The pre-banktruptcy stock price coincides with the solution of
\begin{equation}
dX_t=(r_t+\lambda_t)X_tdt+\sigma_t X_tdW_t^0, \quad X_0=x.
\end{equation}

It will be useful to keep track of different flows of information.
Let $\mathbb{F}=\{\mathcal{F}_t, t\geq 0\}$ be the natural filtration of $\vec{W}$.
Denote the default indicator process by $I_t=1_{\{\tau \leq t\}}$, $t \geq 0$, and let $\mathbb{I}=\{\mathcal{I}_t, t \geq 0\}$ be the filtration generated by $I$. Finally, let $\mathbb{G}=\{\mathcal{G}_t, t \geq 0\}$ be an enlargement of $\mathbb{F}$ such that $\mathcal{G}_t=\mathcal{F}_t \vee \mathcal{I}_t$, $t \geq 0$.  

Since we will take $\eps$ and $\delta$ to be small positive constants, the processes $Y$ and $\tilde{Y}$ are fast mean reverting, and $Z$ evolves on a slower time scale. See
\cite{ronnie-timescale} for an exposition and motivation of multi-scale modeling
in the context of stochastic volatility models.

We note that our specification of the intensity of default coincides with that of \cite{papa}, who considered only a framework for pricing credit derivatives. Our stock price specification is similar to that of \cite{Linetsky} and \cite{carr-linetsky} who considered a framework for only pricing equity options on defaultable stocks. Our volatility specification, on the other hand, is in the spirit of \cite{sircar}.

 \cite{bayraktar2008} considered a similar modeling framework to the one considered here, but the interest rate was taken to be deterministic. In this paper, by extending this modeling framework to incorporate stochastic interest rates, we are able to consistently price credit and equity derivatives and produce more realistic yield curve and implied volatility surfaces. We are also able to take the equity option surface and the yield curve data as given and predict the credit default swap spread on a given day. Testing our model prediction against real data demonstrates the power of our pricing framework.

\subsection{Equity and credit derivatives}
In our framework, we will price European options, bonds, and credit default swaps of the same company in a consistent way.

\textbf{1.} The price of a  European call option with maturity $T$ and strike price $K$ is given by
\begin{equation}
\begin{split}
C(t;T,K)&=\mathbb{E}\left[\exp\left(-\int_t^T r_s
ds\right)(\bar{X}_T-K)^+1_{\{\tau>T\}}\bigg| \G_t \right]
\\&=1_{\{\tau>t\}} \mathbb{E}\left[\exp\left(-\int_t^T (r_s+\lambda_s)
ds\right)(X_T-K)^+\bigg| \F_t\right],
\end{split}
\end{equation}
in which the equality follows from Lemma 5.1.2 of \cite{bielecki-rutkowski}. (This lemma, which lets us write a conditional expectation with respect to $\G_t$ in terms of conditional expectations with respect to $\F_t$, will be used in developing several identities below). Also, see \cite{Linetsky} and \cite{carr-linetsky} for a similar computation. 

On the other hand, the price of a put option with the same maturity and strike price is
{\small
\begin{equation}
\begin{split}
\text{Put}(t;T)&=\mathbb{E}\left[\exp\left(-\int_t^T r_s
ds\right)(K-X_T)^+1_{\{\tau>T\}}\bigg| \G_t\right]+\mathbb{E}\left[\exp\left(-\int_t^T
r_sds\right)K1_{\{\tau\leq T\}}\big| \G_t\right]\\
&=1_{\{\tau>t\}}\Bigg(\mathbb{E}\left[\exp\left(-\int_t^T (r_s+\lambda_s)
ds\right)(K-X_T)^+\bigg| \F_t\right]\\&+K\mathbb{E}\left[\exp\left(-\int_t^T
r_sds\right)\bigg|\F_t\right]-K\mathbb{E}\left[\exp\left(-\int_t^T
(r_s+\lambda_s)ds\right)\bigg|\F_t\right]\Bigg).
\end{split}
\end{equation}}
\textbf{2.} Consider a defaultable bond with maturity $T$ and par value of 1 dollar. We assume the recovery of the market value, introduced by \cite{duffie-singleton1999}. In this model, if the issuer company defaults prior to maturity, the holder of the bond recovers a constant fraction $1-l$ of the pre-default value, with $l \in [0,1]$. The price of such a bond is
\begin{equation}\label{eq:corporate-bond}
\begin{split}
B^{c}(t;T)&=\E \Bigg[\exp\left(-\int_t^{T}r_s ds\right)1_{\{\tau>T\}}
+\exp\left(-\int_t^{\tau}r_s ds\right)1_{\{\tau \leq T\}} \, (1-l) B^c(\tau-;T)\bigg| \G_t\Bigg] \\
&=\E\left[\exp\left(-\int_t^T (r_s
+l\,\lambda_s)ds\right)\bigg| \F_t \right],
\end{split}
\end{equation}
on $\{\tau>t\}$, see \cite{duffie-singleton1999} and \cite{schonbucher1998}.

\textbf{3.} Consider a credit default swap (CDS) written on $B^c$, which is a insurance against losses incurred upon default from holding a corporate bond.
The protection buyer pays a fixed premium, the so-called CDS spread, to the protection seller. The premium is paid on fixed dates 
$\mathcal{T}=(T_1,\cdots,T_M)$, with
$T_M$ being the maturity of the CDS contract. We denote the
CDS spread at time $t$ by $c^{ds}(t;\mathcal{T})$. Our purpose is to determine a fair value for the CDS spread so that what the protection buyer buyer is expected to pay, the value of the premium leg of the contract, is equal to what the protection seller is expected to pay, the value of the protection leg of the contract. For a more detailed description of the CDS contract, see \cite{bielecki-rutkowski} or \cite{schoenbucher}.

The present value of the premium leg of the contract is
\begin{equation}\label{eq:prem-leg}
\begin{split}
\text{Premium}(t;\mathcal{T})&=c^{ds}(t;\mathcal{T})\,\E\left[\displaystyle\sum_{m=1}^M\exp\left(-\int_t^{T_m}
r_s ds\right)1_{\{\tau>T_m\}}\bigg|\G_t\right]\\
&=1_{\{\tau>t\}}c^{ds}(t;\mathcal{T})\displaystyle\sum_{m=1}^M\E\left[\exp\left(-\int_t^{T_m}
(r_s+\lambda_s) ds\right)\bigg| \F_t\right],
\end{split}
\end{equation}
in which we assumed that $t<T_1$.
The present value of the protection leg of the contract under our
assumption of {\it recovery of market value} is
\begin{equation}\label{eq:protection-leg}
\text{Protection}(t;\mathcal{T})=1_{\{\tau>t\}}\E\left[\exp\left(-\int_t^\tau
r_sds\right)1_{\{\tau\leq T_M\}}l\,B^c(\tau-;T_M)\bigg| \G_t\right]
\end{equation}

Adding (\ref{eq:corporate-bond}) and (\ref{eq:protection-leg}), we obtain
\begin{equation}\label{eq:prt-leg+plus-bond}
\begin{split}
\text{Protection}(t;\mathcal{T})+B^c(t;T_M)&=\E \Bigg[\exp\left(-\int_t^{T}r_s ds\right)1_{\{\tau>T\}}
+\exp\left(-\int_t^{\tau}r_s ds\right)1_{\{\tau \leq T\}}  B^c(\tau-;T)\bigg| \G_t\Bigg]\\
&=1_{\{\tau>t\}}\E\left[\exp\left(-\int_t^{T_M}
r_sds\right)\bigg| \F_t\right],
\end{split}
\end{equation}
where the last equality is obtained by setting $l=0$ in (\ref{eq:corporate-bond}).

Now, the CDS spread can be determined, by setting
$\text{Protection}(t;\mathcal{T})=\text{Premium}(t;\mathcal{T})$ and using equations (\ref{eq:prem-leg}) and (\ref{eq:prt-leg+plus-bond}),
as
\begin{equation}\label{eq:cds-form}
c^{ds}(t;\mathcal{T})=1_{\{\tau>t\}}\frac{\mathbb{E}\left[\exp\left(-\int_t^{T_M}
r_sds\right)\bigg|\F_t\right]-\mathbb{E}\left[\exp\left(-\int_t^{T_M}
(r_s
+l\,\lambda_s)ds\right)\bigg|\F_t\right]}{\displaystyle\sum_{m=1}^M\E\left[\exp\left(-\int_t^{T_m}
(r_s+\lambda_s) ds\right)\bigg|\F_t\right]}.
\end{equation}

\section{Explicit Pricing Formulas for Credit and Equity Derivatives}
\subsection{Pricing equation}
Let $P^{\epsilon,\delta}$ denote 
\begin{equation}\label{eq:Peps-delta}
P^{\epsilon,\delta}(t,X_t,r_t,Y_t, \tilde{Y}_t,Z_t)=\E\left[\exp\left(-\int_t^T
(r_s+l \lambda_s) ds\right)h(X_T)\bigg|\F_t\right].
\end{equation}
When $l=1$ and $h(X_T)=(X_T-K)^+$, $P^{\eps,\delta}$ is the price of a call option (on a defaultable stock). On the other hand, when $h(X_T)=1$, $P^{\eps,\delta}$ becomes the price of a defaultable bond.

Using the Feynman-Kac formula, we can characterize $P^{\eps,\delta}$ as the solution of
\begin{equation}\label{mainEq}
\begin{aligned}
&\L^{\epsilon,\delta}P^{\epsilon,\delta}(t,x,r,y,\tilde{y},z)=0, \\
&P^{\epsilon,\delta}(T,x,r,y,\tilde{y},z)=h(x),
\end{aligned}
\end{equation}
where the partial differential operator $\L^{\eps,\delta}$ is defined as 
\begin{gather}\label{L}
\L^{\epsilon,\delta} \triangleq \frac{1}{\epsilon}\L_0+\frac{1}{\sqrt{\epsilon}}\L_1+\L_2+\sqrt{\delta}\M_1+\delta
\M_2+\sqrt{\frac{\delta}{\epsilon}}\M_3,
\end{gather}
in which
\begin{equation*}
\begin{split}
\L_0& \triangleq \nu^2\frac{\partial^2}{\partial y^2}+(m-y)\frac{\partial}{\partial y}+\tilde{\nu}^2\frac{\partial^2}{\partial \tilde{y}^2}+(\tilde{m}-\tilde{y})\frac{\partial}{\partial \tilde{y}}+2\rho_{24} v\tilde{v}\frac{\partial^2}{\partial y\partial \tilde{y}},\\
\L_1& \triangleq \rho_2\sigma(\tilde{y})\nu\sqrt{2}x\frac{\partial^2}{\partial x \partial y}+\rho_{12}\eta\nu\sqrt{2}\frac{\partial^2}{\partial r \partial y}+\rho_4\sigma(\tilde{y})\tilde{\nu}\sqrt{2}x\frac{\partial^2}{\partial x \partial \tilde{y}}+\rho_{14}\eta\tilde{\nu}\sqrt{2}\frac{\partial^2}{\partial r \partial \tilde{y}}-\Lambda(\tilde{y})\tilde{\nu}\sqrt{2}\frac{\partial}{\partial \tilde{y}},\\
\L_2 &\triangleq \frac{\partial}{\partial
t}+\frac{1}{2}\sigma^2(\tilde{y})x^2\frac{\partial^2}{\partial
x^2}+(r+f(y,z))x\frac{\partial}{\partial x}+(\alpha-\beta
r)\frac{\partial}{\partial r}+\sigma(\tilde{y})\eta\rho_1x\frac{\partial^2}{\partial x\partial r}+\frac{1}{2}\eta^2\frac{\partial^2}{\partial r^2}-(r+l\,f(y,z))\cdot,
\end{split}
\end{equation*}
\begin{equation*}
\begin{split}
\M_1& \triangleq \sigma(\tilde{y})\rho_3g(z)x\frac{\partial^2}{\partial x\partial z}+\eta\rho_{13}g(z)\frac{\partial^2}{\partial r\partial z},\quad
\M_2 \triangleq c(z)\frac{\partial}{\partial z}+\frac{1}{2}g^2(z)\frac{\partial^2}{\partial z^2},\\
\M_3& \triangleq \rho_{23}\nu\sqrt{2}g(z)\frac{\partial^2}{\partial y\partial
z}+\rho_{34}\tilde{\nu}\sqrt{2}g(z)\frac{\partial^2}{\partial
\tilde{y}\partial z}.
\end{split}
\end{equation*}
\subsection{Asymptotic expansion}
We construct an asymptotic expansion for $P^{\eps,\delta}$ as $\eps,\delta \rightarrow 0$.
First, we consider an expansion of $P^{\epsilon,\delta}$ in powers
of $\sqrt{\delta}$
\begin{equation}\label{eq:delta-expansion}
P^{\epsilon,\delta}=P_0^\epsilon+\sqrt{\delta}P_1^\epsilon+\delta
P_2^\epsilon+\cdots
\end{equation}
By inserting (\ref{eq:delta-expansion}) into (\ref{mainEq}) and comparing the $\delta^0$ and $\delta$ terms, we obtain that $P_0^\epsilon$ satisfies
\begin{align}\label{slow1}
&\left(\frac{1}{\epsilon}\L_0+\frac{1}{\sqrt{\epsilon}}\L_1+\L_2\right)P_0^\epsilon=0,\\
&P_0^\epsilon(T,x,r,y, \tilde{y},z)=h(x),\notag
\end{align}
and that $P_1^{\eps}$ satisfies
\begin{align}\label{slow2}
&\left(\frac{1}{\epsilon}\L_0+\frac{1}{\sqrt{\epsilon}}\L_1+\L_2\right)P_1^\epsilon=-\left(\M_1+\frac{1}{\sqrt{\epsilon}}\M_3\right)P_0^\epsilon,\\
&P_1^\epsilon(T,x,y,\tilde{y},z,r)=0.\notag
\end{align}

Next, we expand the solutions of (\ref{slow1}) and (\ref{slow2}) in powers of $\sqrt\epsilon$
\begin{align}
&P_0^\epsilon=P_0+\sqrt{\epsilon}P_{1,0}+\epsilon
P_{2,0}+\epsilon^{3/2}P_{3,0}+\cdots\\
&P_1^\epsilon=P_{0,1}+\sqrt{\epsilon}P_{1,1}+\epsilon
P_{2,1}+\epsilon^{3/2}P_{3,1}+\cdots \label{eq:P1eps-expansion}
\end{align}
Inserting the expansion for $P_0^\epsilon$ into (\ref{slow1}) and
matching the $1/\eps$ terms gives $\L_0 P_0=0$. We choose $P_0$ not to depend on $y$ and $\tilde{y}$ because the other solutions have exponential growth at infinity (see e.g. \cite{ronnie-timescale}).
Similarly, by matching the $1/\sqrt{\eps}$ terms in (\ref{slow1}) we obtain that
$\L_0P_{1,0}+\L_1P_0=0$. Since $\L_1$ takes derivatives only with respect to $y$ and $\tilde{y}$, we observe that $\L_0P_{1,0}=0$. We choose $P_{1,0}$ not to depend on $y$ and $\tilde{y}$.

Now equating the order-one terms in the expansion of (\ref{slow1}) and using the
fact that $\L_1P_{1,0}=0$, we get that
\begin{equation}\label{eq:eqorderone}
\L_0P_{2,0}+\L_2P_0=0,
\end{equation}
which is a Poisson equation for $P_{2,0}$ (see e.g. \cite{sircar}). The solvability
condition for this equation requires that
\begin{equation}\label{eq:PDE-P0}
\langle \L_2\rangle P_0=0,
\end{equation}
where $\langle \cdot \rangle$ denotes the averaging with
respect to the invariant distribution of $(Y_t,\tilde{Y}_t)$,
whose density is given by
\begin{equation}
\Psi(y,\tilde{y})=\frac{1}{2\pi\nu\tilde{\nu}}\exp\left\{-\frac{1}{2(1-\rho_{24}^2)}\left[\left(\frac{y-m}{\nu}\right)^2+\left(\frac{\tilde{y}-\tilde{m}}{\tilde{\nu}}\right)^2-2\rho_{24}\frac{(y-m)(\tilde{y}-\tilde{m})}{\nu\tilde{\nu}}\right]\right\}.
\end{equation}
Let us denote
\begin{equation}
\bar{\sigma}_1 \triangleq \langle \sigma(\tilde{y}) \rangle, \quad 
\bar{\sigma}_2^2 \triangleq \langle \sigma^2(\tilde{y}) \rangle, \quad \bar{\lambda}(z)=\langle f(y,z)\rangle.
\end{equation}
To demonstrate the effect of averaging on $\L_2$, let us write
\begin{equation}
\langle \L_2 \rangle:=\frac{\partial}{\partial
t}+\frac{1}{2}\bar{\sigma}_2^2x^2\frac{\partial^2}{\partial
x^2}+(r+\bar{\lambda}(z))x\frac{\partial}{\partial
x}+(\alpha-\beta
r)\frac{\partial}{\partial r}
+\bar{\sigma}_1\eta\rho_1x\frac{\partial^2}{\partial x\partial
r}+\frac{1}{2}\eta^2\frac{\partial^2}{\partial
r^2}-(r+l\,\bar{\lambda}(z))\cdot
\end{equation}
Together with the terminal condition
\begin{equation}\label{eq:bndry-cond-P0}
P_0(T,x,r,z)=h(x),
\end{equation}
equation (\ref{eq:PDE-P0}) defines the leading order term $P_0$.
On the other hand from (\ref{eq:eqorderone}), we can also deduce that
\begin{equation}\label{eq:p20}
P_{2,0}=-\L_0^{-1}(\L_2-\langle \L_2\rangle)P_0.
\end{equation}
Matching the $\sqrt{\eps}$ order terms in the expansion of (\ref{slow1}) yields
\begin{equation}
\L_0P_{3,0}+\L_1P_{2,0}+\L_2P_{1,0}=0,
\end{equation}
which is a Poisson equation for $P_{3,0}$. The solvability
condition for this equation requires that
\begin{equation}\label{eq:PDE-P20}
\langle \L_2P_{1,0}\rangle=-\langle \L_1P_{2,0}\rangle=\langle \L_1\L_0^{-1}(\L_2-\langle \L_2\rangle)\rangle P_0,
\end{equation}
which along with the terminal condition 
\begin{equation}\label{eq:PDE-P20-b}
P_{1,0}(T,x,r,z)=0,
\end{equation}
completely identifies the function $P_{1,0}$.
To obtain the second equality in (\ref{eq:PDE-P20}) we used (\ref{eq:p20}).

Next, we will express the right-hand side of (\ref{eq:PDE-P20}) more explicitly. To this end, 
 let $\psi$, $\kappa$, and $\phi$ be the solutions of the Poisson equations
\begin{equation}\label{eq:Poisson-equations}
\L_0\psi(\tilde{y})=\sigma(\tilde{y})-\bar{\sigma}_1 \quad
\L_0\kappa(\tilde{y})=\sigma^2(\tilde{y})-\bar{\sigma}_2^2, \quad \text{and} \quad \L_0\phi(y,z)=(f(y,z)-\bar{\lambda}(z)),
\end{equation}
respectively.
First observe that
\begin{equation}
(\L_2-\langle \L_2\
\rangle) P_0=\frac{1}{2}(\sigma^2(\tilde{y})-\bar{\sigma}_2^2)x^2\frac{\partial^2P_0}{\partial
x^2}+(\sigma(\tilde{y})-\bar{\sigma}_1)\eta\rho_1x\frac{\partial
^2P_0}{\partial x\partial r}
+l\,(f(y,z)-\bar{\lambda}(z))\left(x\frac{\partial P_0}{\partial x }-P_0\right).
\end{equation}
Now, along with (\ref{eq:Poisson-equations}), we can write
\begin{equation}
\L_0^{-1}(\L_2-\langle \L_2\rangle)P_0=\frac{1}{2}\kappa(\tilde{y})
x^2\frac{\partial^2 P_0}{\partial
x^2}+\psi(\tilde{y})\eta\rho_1x\frac{\partial ^2 P_0}{\partial
x\partial r} +l\,\phi(y,z)\left(x\frac{\partial P_0}{\partial
x}-P_0\right).
\end{equation}
Applying the differential operator $\L_1$ to the last expression yields 
\begin{equation}\label{p10rhs}
\begin{split}
\langle \L_1 \L_0^{-1}(\L_2-\langle
\L_2\rangle)\rangle P_0
&=l\,\rho_2\nu\sqrt{2}\langle\sigma\phi_y\rangle
x^2\frac{\partial P_0}{\partial
x^2}+l\,\rho_{12}\eta\nu\sqrt{2}\langle\phi_y\rangle\frac{\partial} {\partial
r} \left(x\frac{\partial P_0}{\partial
x}-P_0\right)\\
&+\rho_4\tilde{\nu}\sqrt{2}\left(\frac{1}{2}\langle\sigma\kappa_{\tilde{y}}\rangle
x\frac{\partial}{\partial x}\left(x^2\frac{\partial^2 P_0}{\partial
x^2}\right)+\langle\sigma\psi_{\tilde{y}}\rangle\eta\rho_1x\frac{\partial} {\partial
x}\left(x\frac{\partial^2 P_0}{\partial x\partial
r}\right)\right)\\
&+\rho_{14}\eta\tilde{\nu}\sqrt{2}\left(\frac{1}{2}\langle\kappa_{\tilde{y}}\rangle\frac{\partial}{\partial
r}\left(x^2\frac{\partial^2 P_0}{\partial
x^2}\right)+\langle\psi_{\tilde{y}}\rangle\eta\rho_1\frac{\partial}{\partial
r}\left(x\frac{\partial^2 P_0}{\partial x\partial
r}\right)\right)\\
&-\tilde{\nu}\sqrt{2}\left(\frac{1}{2}\langle\Lambda\kappa_{\tilde{y}}\rangle
x^2\frac{\partial P_0}{\partial
x^2}+\langle\Lambda\psi_{\tilde{y}}\rangle\eta\rho_1x\frac{\partial^2 P_0}{\partial
x\partial r}\right).
\end{split}
\end{equation}

Finally, we insert the expression for $P_1^{\eps}$ in (\ref{eq:P1eps-expansion}) into (\ref{slow2}) and collect the terms with the same powers of $\eps$. Arguing as before, we obtain that
$P_{0,1}$ is independent of
$y$ and $\tilde{y}$ and satisfies:
\begin{equation}\label{p2Eq}
\begin{split}
\langle \L_2\rangle P_{0,1}&=-\langle \M_1\rangle P_0,\quad 
P_{0,1}(T,x)=0.
\end{split}
\end{equation}

\subsection{Explicit pricing formula}\label{sec:explct}

We approximate $P^{\eps,\delta}$ defined in (\ref{eq:Peps-delta}) by
\begin{equation}\label{eq:apprx}
\widetilde{P}^{\eps, \delta}=P_0+ \sqrt{\eps}P_{1,0}+\sqrt{\delta}P_{0,1}.
\end{equation}
Since the Vasicek interest rate process is unbounded, which implies that the potential term in $\L_2$ or the discounting term in \eqref{eq:Peps-delta} is unbounded, the arguments of 
 \cite{ronnie-timescale} can not be directly used. However as in \cite{MR2046929} and \cite{papa}, one can write
 \begin{equation}
 P^{\epsilon,\delta}(t,X_t,r_t,Y_t, \tilde{Y}_t,Z_t)=B(t,T)\E^T\left[\exp\left(-\int_t^T
l \lambda_s ds\right)h(X_T)\bigg|\F_t\right]:=B(t,T)F^{\epsilon,\delta}(t,X_t,r_t,Y_t, \tilde{Y}_t,Z_t),
 \end{equation}
in which 
\begin{equation}\label{eq:RN}
\frac{d\P^{T}}{d\P}=\frac{\exp\left(-\int_0^{T}r_s ds\right)}{B(0,T)},
\end{equation}
and
\begin{equation}
B(t,T)=\E\left[\exp\left(-\int_t^{T}r_s ds\right)\bigg|\F_t\right].
\end{equation}
 Now, the analysis of  \cite{ronnie-timescale} can be used to approximate $F^{\epsilon,\delta}(t,x,r,y,\tilde{y},z)$.
 As a result of this analysis for each $(t,x,r,y,\tilde{y},z)$, there exists a constant $C$ such that
$|P^{\eps,\delta}-\widetilde{P}^{\eps,\delta}| \leq C \cdot (\eps+\delta)$ when $h$ is smooth, and
$|P^{\eps,\delta}-\widetilde{P}^{\eps,\delta}| \leq C \cdot (\eps \log(\eps)+\delta+ \sqrt{\eps \delta})$ when $h$ is a put or a call pay-off. In what follows, we will obtain $P_0$, $P_{1,0}$ and $P_{0,1}$ explicitly.

Our first objective is to develop a closed-form expression for $P_0$, the solution of (\ref{eq:PDE-P0}) and (\ref{eq:bndry-cond-P0}). 
\begin{prop}
The leading order term $P_0$ in (\ref{eq:apprx}) is given by:
\begin{align}\label{gp}
P_0(t,x,z,r)&=B
^c_0(t,r;z,T,l)\int_{-\infty}^{\infty}h(\exp(u))\frac{1}{\sqrt{2\pi
v(t,T)}}\exp(-\frac{(u-m(t,T))^2}{2v(t,T)})du,
\end{align}
where
\begin{equation}\label{btt} B
^c_0(t,r;z,T,l) \triangleq \exp\big(-l\bar{\lambda}(z)(T-t)+ a(T-t)-b(T-t)r),
\end{equation}
in which the functions $a(s)$ and $b(s)$ are defined as:
\begin{equation}
\begin{split}
a(s)=\left(\frac{\eta^2}{2\beta^2}-\frac{\alpha}{\beta}\right)s+\left(\frac{\eta^2}{\beta^3}-\frac{\alpha}{\beta^2}\right)(\exp(-\beta
s)-1)-\frac{\eta^2}{4\beta^3}(\exp(-2\beta s)-1)
\end{split}
\end{equation}
and $b(s)=(1-\exp(-\beta s))/\beta$. 
On the other hand,
\begin{equation}
\begin{split}
v_{t,T}&=\left(\bar{\sigma}_2^2+\frac{2\eta \rho_1\bar\sigma_1}{\beta}+\frac{\eta^2}{\beta^2}\right)(T-t)+\left(\frac{2\eta\rho_1\bar\sigma_1}{\beta^2}+\frac{2\eta^2}{\beta^3}\right)\exp(-\beta(T-t))\\
&-\frac{\eta^2}{2\beta^3}\exp(-2\beta(T-t))-\left(\frac{2\eta\rho_1\bar\sigma_1}{\beta^2}+\frac{3\eta^2}{2\beta^3}\right),
\end{split}
\end{equation}
and 
\begin{equation}
m_{t,T}=\log(x)+\bar{\lambda}\cdot(T-t)-a(T-t)+b(T-t)r-\frac{1}{2}v(t,T).
\end{equation}
\end{prop}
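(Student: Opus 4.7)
The plan is to reduce the PDE $\langle\L_2\rangle P_0=0$ with terminal datum $h(x)$ to a parabolic equation whose fundamental solution is Gaussian. First, we make the ansatz $P_0(t,x,r,z)=B^c_0(t,r;z,T,l)\,F(t,x,r)$, with $B^c_0$ as in the statement. The exponential–affine form of $B^c_0$ is motivated by the case $h\equiv 1$, for which $P_0$ should reduce to the defaultable zero-coupon bond price with Vasicek rate and constant intensity $l\bar\lambda(z)$. Demanding that $B^c_0$ solve
$$\partial_t B^c_0+(\alpha-\beta r)\partial_r B^c_0+\tfrac12\eta^2\partial_{rr}B^c_0-(r+l\bar\lambda(z))B^c_0=0,\quad B^c_0(T,\cdot)=1,$$
yields, by matching the coefficients of $r$ and of the constant term, the ODEs $b'(s)+\beta b(s)=1$ with $b(0)=0$ (giving $b(s)=(1-e^{-\beta s})/\beta$) and $a'(s)=-\alpha b(s)+\tfrac12\eta^2 b(s)^2$ with $a(0)=0$, which integrate to the stated formula for $a$.

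Substituting $P_0=B^c_0 F$ into $\langle\L_2\rangle P_0=0$ and using $\partial_r B^c_0=-b(T-t)B^c_0$ produces a reduced PDE for $F$ with no zero-order term:
$$F_t+\tfrac12\bar\sigma_2^2 x^2 F_{xx}+\bigl[(r+\bar\lambda)-\bar\sigma_1\eta\rho_1 b(T-t)\bigr]x F_x+\bigl[(\alpha-\beta r)-\eta^2 b(T-t)\bigr]F_r+\bar\sigma_1\eta\rho_1 x F_{xr}+\tfrac12\eta^2 F_{rr}=0,$$
with $F(T,x,r)=h(x)$. Passing to log-price $u=\log x$ turns this into a linear parabolic equation with linear drifts and constant diffusion matrix in $(u,r)$, so its fundamental solution is jointly Gaussian in $(u,r)$; since the terminal datum depends only on $u$, integrating out $r$ reduces $F$ to a one-dimensional Gaussian average of $h(e^u)$ as in the statement.

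The main computation is then the determination of $m(t,T)$ and $v(t,T)$. For the variance, we use the Vasicek representation $\int_t^T r_s\,ds=(\text{deterministic})+\eta\int_t^T b(T-u)\,dW^1_u$ to express the increment of $\log X$ under the effective dynamics and apply It\^o isometry with $\langle dW^0,dW^1\rangle=\rho_1\,dt$, arriving at
$$v(t,T)=\bar\sigma_2^2(T-t)+2\eta\rho_1\bar\sigma_1\int_0^{T-t}b(s)\,ds+\eta^2\int_0^{T-t}b(s)^2\,ds,$$
whose closed-form evaluation (matching coefficients of $(T-t)$, $e^{-\beta(T-t)}$, $e^{-2\beta(T-t)}$ and a constant term) reproduces the stated expression. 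For the mean, rather than integrating an ODE by hand it is cleaner to impose the log-normal identity $\E^T[e^{U_T}\mid\F_t]=e^{m+v/2}$ together with the drift consistency $\E^T[X_T\mid\F_t]=xe^{\bar\lambda(T-t)}/B(t,T)$, which follows from the effective drift $r+\bar\lambda$ for $X$ and the fact that $B(t,T)=\exp(a(T-t)-b(T-t)r)$ is the forward numeraire; this immediately forces $m(t,T)=\log x+\bar\lambda(T-t)-a(T-t)+b(T-t)r-\tfrac12 v(t,T)$. The principal obstacle is the bookkeeping in evaluating $\int_0^{T-t}b^2$ and in tracking the asymmetry between $\bar\sigma_2^2$ in the diffusion and $\bar\sigma_1$ in the cross-term; this asymmetry is an artifact of averaging $\sigma$ and $\sigma^2$ separately against $\Psi$, and it is why $\bar\sigma_1\rho_1$ rather than $\bar\sigma_2\rho_1$ appears in the covariance contribution to $v(t,T)$.
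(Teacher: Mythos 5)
Your proposal is correct but takes a genuinely different route from the paper. You reduce $\langle\L_2\rangle P_0=0$ by a multiplicative ansatz $P_0=B^c_0\,F$, derive the Riccati-type ODEs for $a,b$ by imposing the Vasicek bond PDE on $B^c_0$, and then observe that the residual equation for $F$ (which in fact you compute correctly: after dividing by $B^c_0$, the zero-order term cancels and the drifts acquire the Girsanov corrections $-\bar\sigma_1\eta\rho_1 b(T-t)$ and $-\eta^2 b(T-t)$) is a state-affine, constant-diffusion parabolic equation in $(\log x, r)$, whence its fundamental solution is jointly Gaussian and the $r$-marginal (at the terminal time) can be integrated out. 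The paper instead proceeds probabilistically: after factoring out $e^{-l\bar\lambda(T-t)}$ it changes num\'eraire to the $T$-forward measure $\P^T$ via $d\P^T/d\P = e^{-\int_0^T r_s\,ds}/B(0,T)$, recognizes $F_t = \tilde S_t/B(t,T)$ as a $\P^T$-martingale, and reads off the log-normal law of $F_T$ directly from the It\^o decomposition of $\log F$. These are two presentations of the same underlying fact: your drift corrections in the $F$-PDE are exactly the $\P^T$-drift adjustments, so your analytic reduction implicitly constructs the forward measure. Your computation of the variance via the Vasicek representation $\int_t^T r_s\,ds = (\text{det.})+\eta\int_t^T b(T-u)\,dW^1_u$ plus It\^o isometry agrees with the paper's $v_{t,T}=\bar\sigma_2^2(T-t)+\eta^2\int_t^T b^2+2\eta\bar\rho_1\bar\sigma_2\int_t^T b$ once you use $\bar\rho_1\bar\sigma_2=\rho_1\bar\sigma_1$; and your determination of $m$ from the moment identity $\E^T[e^{U_T}\mid\F_t]=e^{m+v/2}$ together with $\E^T[X_T\mid\F_t]=xe^{\bar\lambda(T-t)}/B(t,T)$ is a clean shortcut that avoids integrating the drift by hand. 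One small imprecision: ``integrating out $r$'' should be read as marginalizing the terminal value of the short rate in the joint Gaussian transition density, not eliminating the initial $r$ (which persists through $m$ and $B^c_0$). The num\'eraire-change argument buys conciseness and a ready-made error analysis (via \cite{MR2046929} and \cite{papa}); your PDE reduction is more elementary and makes the affine structure and the origin of $a$ and $b$ transparent, at the cost of heavier bookkeeping. Your remark on the $\bar\sigma_1$ versus $\bar\sigma_2$ asymmetry in the covariance term is exactly right and is the same content as the paper's introduction of the effective correlation $\bar\rho_1=(\bar\sigma_1/\bar\sigma_2)\rho_1$.
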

\begin{proof}
By applying the Feynman-Kac theorem to (\ref{eq:PDE-P0}) and (\ref{eq:bndry-cond-P0}) we have that 
\begin{equation}
\begin{split}
P_0(t,x,z,r)&= \mathbb{E}\left[\exp\left(-\int_t^T (r_s+l\bar{\lambda}(z))
ds\right)h(S_T)\bigg| S_t=x, r_t=r\right],
\end{split}
\end{equation}
where the dynamics of $S$ is given by
\begin{equation}\label{eq:effective}
\begin{split}
dS_t=(r_t+\bar{\lambda}(z))S_tdt+\bar{\sigma}_2 S_td\widetilde{W}_t^0, \quad 
\end{split}
\end{equation}
in which $\widetilde{W}^0$ is a Wiener process whose correlation with $W^1$ is
$\bar{\rho}_1=\frac{\bar{\sigma}_1}{\bar{\sigma}_2}\rho_1$.

Let us define
\begin{equation}
\tilde{P}_0(t,x,r)=\E\left[\exp\left(-\int_t^T r_s
ds\right)h(\widetilde{S}_T)\bigg| \widetilde{S}_t=x, r_t=r\right],
\end{equation}
in which
\begin{equation}
d\widetilde{S}_t=r_t \widetilde{S}_tdt+\bar{\sigma}_2 \widetilde{S}_td\widetilde{W}_t^0. \quad 
\end{equation}
Then 
\begin{equation}
P_0(t,x,z,r)=e^{-l\bar{\lambda}(z)\,(T-t)}\tilde{P}_0(t,x \exp(\bar{\lambda}(z) (T-t)),r).
\end{equation}

 Now, by following \cite{geman}, we change the probability measure $\P$ to the forward measure $\P^T$ through the Radon-Nikodym derivative \eqref{eq:RN}

We can obtain the following representation of $\tilde{P_0}$ using the $T$-forward measure
\begin{equation}
\tilde{P}_0(t,\widetilde{S}_t,r_t)=B(t,T)\E^T \left[h(\widetilde{S}_T)|\F_t\right]=B(t,T)\E^T\left[h(F_T)|\F_t\right],
\end{equation}
in which 
\begin{equation}\label{eq:F}
F_t \triangleq \frac{\widetilde{S}_t}{B(t,T)},
\end{equation}
which is a $\P^T$ martingale.
Note that an explicit expression for $B(t,T)$ is available since $r_t$ is a Vasicek model, and it is given in terms of the functions $a$ and $b$:
\begin{equation}
B(t,T)=\exp(a(T-t)-b(T-t)r_t).
\end{equation}
By  applying It\^{o}'s formula to (\ref{eq:F}), we observe that the dynamics of $F$ are
\begin{equation}
dF_t=F_t(\bar{\sigma}_1 d\widetilde{W}^0_t+b(T-t)\eta d\widetilde{W}_t^1),
\end{equation}
in which $\widetilde{W}^1$ is a $\P^{T}$ Brownian motion whose correlation with the $\widetilde{W}^0$ (which is still a Brownian motion under $\P^T$) is $\bar{\rho}_1$.
Given $X_t$ and $B(t,T)$, the random variable
$\log F_T$ is normally distributed with
variance
\begin{equation}
\begin{split}
v_{t,T}&=\bar{\sigma}_2^2(T-t)+\eta^2\int_t^T
b^2(T-s)ds+2\eta\bar \rho_1 \bar{\sigma}_2 \int_t^T b(T-s)ds\\
&=\left(\bar{\sigma}_2^2+\frac{2\eta\bar\rho_1\bar\sigma_2}{\beta}+\frac{\eta^2}{\beta^2}\right)(T-t)+\left(\frac{2\eta\bar\rho_1\bar\sigma_2}{\beta^2}+\frac{2\eta^2}{\beta^3}\right)\exp(-\beta(T-t))\\
&-\frac{\eta^2}{2\beta^3}\exp(-2\beta(T-t))-\left(\frac{2\eta\bar\rho_1\bar\sigma_2}{\beta^2}+\frac{3\eta^2}{2\beta^3}\right),
\end{split}
\end{equation}
and mean
\begin{equation}
m(t,T)=\log
F_t-\frac{1}{2}\int_t^T(\bar{\sigma}_2^2+b^2(T-s)\eta^2+\bar{\rho}_1\bar{\sigma}_2
b(T-s)\eta)ds
=\log \left(\frac{\widetilde{S}_t}{B(t,T)}\right)-\frac{1}{2}v_{t,T}.
\end{equation}
Now the result immediately follows.
\end{proof}

An immediate corollary of the last proposition is the following:
\begin{cor}\label{cor:p0}
i) When $l=1$, $h(x)=(x-K)^+$, then (\ref{gp}) becomes
\begin{equation}\label{eq:call-opt}
C_0(t,x,z,r)=xN(d_1)-KB_0
^c(t,r;z,T,1)N(d_2),
\end{equation}
in which $N$ is the standard normal cumulative distribution
function and
\begin{equation}
d_{1,2}=\frac{\log\frac{x}{K B
^c_0(t,r;z,T,1)} \pm \frac{1}{2}v_{t,T}}{\sqrt{v_{t,T}}}.
\end{equation}

ii) When $l=1$, and $h(x)=(K-x)^+$, then  (\ref{gp}) becomes
\begin{align}\label{eq:put-opt}
\text{Put}_0(t,x,z,r)=-x+xN(d_1)-K B_0
^c(t,r;z,T,1)N(d_2)
+K B_0^c(t,r;z,T,0).
\end{align}

iii) When $h(x)=1$, then (\ref{gp}) coincides with (3.30) in \cite{papa}.
\end{cor}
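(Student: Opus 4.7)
The three parts are direct consequences of the proposition: each amounts to evaluating the Gaussian integral in (\ref{gp}) for a specific payoff $h$ and matching the result to the claimed closed form. The plan is to treat (iii) first as a sanity check, then (i) by standard Black--Scholes-type manipulations, and finally (ii) by combining (i) with the representation of $\text{Put}(t;T)$ obtained in Section 2.2.

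For (iii), setting $h\equiv 1$ makes the inner integral equal to $1$, so $P_0(t,x,z,r)=B^c_0(t,r;z,T,l)$. I would then just read off $a(T-t)$, $b(T-t)$ and the $l\bar\lambda(z)(T-t)$ term from (\ref{btt}) and check, term by term, that this matches equation (3.30) of \cite{papa}. For (i), I would split
\begin{equation*}
\int_{-\infty}^{\infty} (e^{u}-K)^{+}\,\frac{1}{\sqrt{2\pi v_{t,T}}}\exp\!\left(-\tfrac{(u-m_{t,T})^2}{2v_{t,T}}\right)du
\end{equation*}
into $\int_{\log K}^{\infty}e^{u}\,\phi\,du - K\int_{\log K}^{\infty}\phi\,du$, complete the square in the first integrand, and obtain $e^{m_{t,T}+v_{t,T}/2}N(d_1)-K\,N(d_2)$ with $d_1=(m_{t,T}-\log K+v_{t,T})/\sqrt{v_{t,T}}$ and $d_2=d_1-\sqrt{v_{t,T}}$. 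The key algebraic collapse is that, with $l=1$, the prefactor $B^c_0(t,r;z,T,1)\cdot e^{m_{t,T}+v_{t,T}/2}$ equals $x$ exactly, because the $\pm\bar{\lambda}(z)(T-t)$, $\pm a(T-t)$, $\mp b(T-t)r$, and $\mp v_{t,T}/2$ terms in the definitions of $B^c_0$ and $m_{t,T}$ cancel pairwise. The same substitution rewrites $d_{1,2}$ in the form stated in the corollary, namely $d_{1,2}=\bigl[\log(x/(KB^c_0(t,r;z,T,1)))\pm \tfrac12 v_{t,T}\bigr]/\sqrt{v_{t,T}}$.

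For (ii), I would not apply (\ref{gp}) directly to $h(x)=(K-x)^{+}$ and declare that the put, because the put formula of Section 2.2 contains two additional terms arising from the cash payment $K$ at default. Instead, I would read the put as a sum of three $P_0$-type quantities: the first corresponds to $h(x)=(K-x)^{+}$ with $l=1$, the second to $h\equiv 1$ with $l=0$ (giving $K B^c_0(t,r;z,T,0)$, the default-free zero-coupon bond), and the third to $h\equiv 1$ with $l=1$ (giving $-K B^c_0(t,r;z,T,1)$). The first piece follows from the put--call parity identity $(K-e^u)^{+}=(e^u-K)^{+}+K-e^u$ applied inside the Gaussian integral, together with the identification $B^c_0(t,r;z,T,1)\cdot e^{m_{t,T}+v_{t,T}/2}=x$ from part (i); this yields $K B^c_0(t,r;z,T,1)-x+xN(d_1)-KB^c_0(t,r;z,T,1)N(d_2)$. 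Adding the other two bond terms cancels the lone $K B^c_0(t,r;z,T,1)$ against the third piece and leaves $+K B^c_0(t,r;z,T,0)$, producing exactly (\ref{eq:put-opt}).

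No step is genuinely difficult; the main care needed is bookkeeping of the many sign conventions in $m_{t,T}$, $a(T-t)$ and $b(T-t)$ so that the identity $B^c_0 \cdot e^{m_{t,T}+v_{t,T}/2}=x$ holds as a clean equality rather than up to spurious factors, and being careful in (ii) to use the full three-term put representation of Section 2.2 rather than the naive payoff $(K-x)^{+}$ alone.
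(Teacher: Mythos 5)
Your proof is correct, and since the paper offers no explicit proof (the corollary is labelled ``immediate''), your derivation is the natural one and fills in the details correctly. In (i), the key cancellation you identify — that $B^c_0(t,r;z,T,1)\cdot e^{m_{t,T}+v_{t,T}/2}=x$, because $m_{t,T}+v_{t,T}/2=\log x+\bar\lambda(T-t)-a(T-t)+b(T-t)r$ exactly offsets the exponent in $B^c_0(t,r;z,T,1)$ — is indeed the step that reduces the Gaussian integral to the stated Black--Scholes form, and your $d_{1,2}$ bookkeeping is right. In (iii), $h\equiv 1$ trivially gives $P_0=B^c_0(t,r;z,T,l)$ and the rest is notation-matching against \cite{papa}.

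You are also right to flag the subtlety in (ii), and it is worth being explicit about it: formula (\ref{gp}) applied naively with $h(x)=(K-x)^{+}$ and $l=1$ produces $KB^c_0(t,r;z,T,1)-x+xN(d_1)-KB^c_0(t,r;z,T,1)N(d_2)$, whose last constant is $KB^c_0(t,r;z,T,1)$, not the $KB^c_0(t,r;z,T,0)$ appearing in (\ref{eq:put-opt}). The statement of the corollary is loosely worded; what is meant is the leading-order approximation of the full put price from Section 2.2, which is the sum of three conditional expectations, and each contributes its own $P_0$: the payoff term with $l=1$, the risk-free bond term $K\E[\exp(-\int_t^T r_s\,ds)\mid\F_t]$ approximated by $KB^c_0(t,r;z,T,0)$, and the survival term $-K\E[\exp(-\int_t^T(r_s+\lambda_s)\,ds)\mid\F_t]$ approximated by $-KB^c_0(t,r;z,T,1)$. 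Adding these, the two $KB^c_0(t,r;z,T,1)$ terms cancel and the $KB^c_0(t,r;z,T,0)$ survives, which is exactly what you wrote. No gap.
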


\begin{prop}\label{prop:sec-term}
The correction term $\sqrt{\epsilon}P_{1,0}$ is given by
\begin{equation}\label{p1}
\begin{split}
\sqrt{\epsilon}P_{1,0}&=-(T-t)\left(V_1^\epsilon
x^2\frac{\partial^2P_0}{\partial x^2}+V_2^\epsilon
x\frac{\partial}{\partial x}\left(x^2\frac{\partial^2P_0}{\partial
x^2}\right)\right)
\\&+l \, V_3^\epsilon\left(-x\frac{\partial^2 P_0}{\partial x \partial \alpha}-\frac{\partial P_0}{\partial \alpha}\right)+V^{\epsilon}_4x^2\frac{\partial^3P_0}{\partial
x^2 \partial \alpha}+V_5^\epsilon x\frac{\partial^2
P_0}{\partial \eta \partial x}+V_6^\epsilon x \frac{\partial^2 P_0}{\partial x \partial \alpha},
\end{split}
\end{equation}
in which
\begin{equation}
\begin{split}
V_1^\epsilon&=\sqrt{\epsilon}\left(l\,\rho_2\nu\sqrt{2}\langle \sigma\phi_y \rangle-\tilde{\nu}\sqrt{2}\frac{1}{2}\langle\Lambda\kappa_{\tilde{y}}\rangle\right), \quad
V_2^\epsilon=\frac{1}{2}\sqrt{\epsilon}\rho_4\tilde{\nu}\sqrt{2}\langle \sigma\kappa_{\tilde{y}}\rangle,\\
V_3^\epsilon&=\sqrt{\epsilon}(\rho_{12}\eta\nu\sqrt{2}\langle\phi_y\rangle),\quad V_4^\epsilon=-\sqrt{\epsilon}\left(\frac{1}{2}\rho_{14}\eta\tilde{\nu}\sqrt{2}\langle\kappa_{\tilde{y}}\rangle-\rho_4\tilde{\nu}\sqrt{2}\langle\sigma\psi_{\tilde{y}}\rangle\eta\rho_1+\rho_{14}\eta\tilde{\nu}\sqrt{2}\langle\psi_{\tilde{y}}\rangle\bar{\sigma}_1\rho_1^2\right),\\
V_5^\epsilon&=-\sqrt{\epsilon}(\rho_{14}\eta\tilde{\nu}\sqrt{2}\langle \psi_{\tilde y}\rangle\rho_1),\quad V_6^\epsilon=\sqrt{\epsilon}(-\rho_4\tilde{\nu}\sqrt{2}\langle\sigma\psi_{\tilde{y}}\rangle\eta\rho_1+\rho_{14}\eta\tilde{\nu}\sqrt{2}\langle\psi_{\tilde{y}}\rangle\bar{\sigma}_1\rho_1^2-\tilde{\nu}\sqrt{2}\langle\Lambda\psi_{\tilde{y}}\rangle\eta\rho_1).
\end{split}
\end{equation}
\end{prop}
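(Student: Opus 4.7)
The plan is to solve the linear parabolic equation $\langle\mathcal{L}_2\rangle P_{1,0} = g$ with the zero terminal condition $P_{1,0}(T,\cdot)=0$, where $g$ is the explicit right-hand side exhibited in~\eqref{p10rhs}. Rather than invoking Feynman--Kac directly, I would build a preimage of $g$ under $\langle\mathcal{L}_2\rangle$ term by term using two elementary moves: multiplication by $-(T-t)$, which inverts any element of the kernel of $\langle\mathcal{L}_2\rangle$ because $(T-t)$ only fails to commute with $\partial_t$; and differentiation of the defining identity $\langle\mathcal{L}_2\rangle P_0=0$ with respect to the Vasicek parameters $\alpha$ and $\eta$.

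The technical backbone consists of two families of identities. First, $D\triangleq x\partial_x$ commutes with $\langle\mathcal{L}_2\rangle$, because every $x$-dependent coefficient of $\langle\mathcal{L}_2\rangle$ is polynomial in $D$ (from $\tfrac{1}{2}\bar\sigma_2^2(D^2-D)$, $(r+\bar\lambda)D$, and $\bar\sigma_1\eta\rho_1 D\partial_r$). Hence $\langle\mathcal{L}_2\rangle\mathcal{B}P_0 = 0$ for any polynomial $\mathcal{B}$ in $D$, so $-(T-t)\mathcal{B}P_0$ is a preimage of $\mathcal{B}P_0$. Second, differentiating $\langle\mathcal{L}_2\rangle P_0=0$ in $\alpha$ and in $\eta$ yields
\begin{equation*}
\langle\mathcal{L}_2\rangle\partial_\alpha P_0 = -\partial_r P_0,\qquad \langle\mathcal{L}_2\rangle\partial_\eta P_0 = -\bar\sigma_1\rho_1 D\partial_r P_0 - \eta\,\partial^2_{rr}P_0,
\end{equation*}
since $\partial_\alpha\langle\mathcal{L}_2\rangle=\partial_r$ and $\partial_\eta\langle\mathcal{L}_2\rangle=\bar\sigma_1\rho_1 D\partial_r+\eta\,\partial^2_{rr}$. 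Because the payoff $h(x)$ depends on neither $r$ nor $\alpha$ nor $\eta$, every parameter- or $r$-derivative of $P_0$ vanishes at $t=T$, so all preimages produced by these identities automatically satisfy the terminal condition.

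With this dictionary in hand, the proof reduces to bookkeeping. The source terms of the form $\mathcal{B}P_0$ with $\mathcal{B}\in\{x^2\partial^2_{xx},\,D\cdot x^2\partial^2_{xx}\}$ (grouping the $\Lambda\kappa_{\tilde y}$-piece with the $\sigma\phi_y$-piece) are inverted by $-(T-t)\mathcal{B}P_0$ and provide the $V_1^\epsilon,V_2^\epsilon$ contributions. The source $\partial_r(D-1)P_0$ is inverted by $-(D-1)\partial_\alpha P_0$, producing the $V_3^\epsilon$ contribution. The sources of the form $\mathcal{B}\partial_r P_0$ with $\mathcal{B}\in\{x^2\partial^2_{xx},\,D,\,D^2\}$ are inverted by $-\mathcal{B}\partial_\alpha P_0$; after using $D^2=D+x^2\partial^2_{xx}$, these feed the $x^2\partial^3_{xx\alpha}P_0$ and $x\partial^2_{x\alpha}P_0$ parts of $V_4^\epsilon$ and $V_6^\epsilon$. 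Summing all contributions and pulling out the common $\sqrt{\epsilon}$ factor reproduces~\eqref{p1}.

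The hard part is the only source term containing a second derivative in $r$, namely $\rho_{14}\eta^2\tilde\nu\sqrt{2}\langle\psi_{\tilde y}\rangle\rho_1\, D\partial^2_{rr}P_0$, arising from $\partial_r(x\partial^2_{xr}P_0)$. No operator commuting with $\langle\mathcal{L}_2\rangle$ produces $\partial^2_{rr}$, and the only parameter sensitivity that yields it is $\partial_\eta$, which simultaneously contributes an unwanted $\bar\sigma_1\rho_1 D\partial_r$ piece. I would resolve this by solving the second identity for $\partial^2_{rr}P_0 = -\eta^{-1}\langle\mathcal{L}_2\rangle\partial_\eta P_0 - \eta^{-1}\bar\sigma_1\rho_1 D\partial_r P_0$ and then inverting the residual $D\partial_r P_0$ piece via $-D\partial_\alpha P_0$. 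The resulting preimage contains a $x\partial^2_{x\eta}P_0$ piece (which gives $V_5^\epsilon$) together with additional $x\partial^2_{x\alpha}P_0$ and $x^2\partial^3_{xx\alpha}P_0$ pieces that get absorbed into $V_4^\epsilon$ and $V_6^\epsilon$; this is precisely the reason those two coefficients are sums of three distinct contributions.
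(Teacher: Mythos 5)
Your proposal is correct and follows essentially the same route as the paper's proof: the commutation of polynomials in $D=x\partial_x$ with $\langle\L_2\rangle$ plus the $-(T-t)$ inversion trick are the paper's observations 1--2, and your $\alpha$- and $\eta$-differentiations of $\langle\L_2\rangle P_0=0$ are precisely observations 3--4, with your treatment of the $\partial_{rr}^2 P_0$ source exactly reconstructing the paper's observation 4 that $\tfrac{1}{\eta}\bigl(\bar\sigma_1\rho_1\, x\,\partial^2_{x\alpha}P_0-\partial_\eta P_0\bigr)$ inverts $\partial^2_{rr}P_0$. Your write-up is in fact more explicit than the paper's (which dismisses the final step as "a simple algebraic exercise"), and correctly diagnoses why $V_4^\epsilon$ and $V_6^\epsilon$ collect three contributions each.
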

\begin{proof}
Recall that $P_{1,0}$ is the solution of (\ref{eq:PDE-P20}) and (\ref{eq:PDE-P20-b}) and that the right-hand-side of  (\ref{eq:PDE-P20}) is given by (\ref{p10rhs}).  
The result is a simple algebraic exercise given the following four observations:

\noindent 1)  $x^n\frac{\partial^n}{\partial x^n}$ commutes
with $\langle \L_2\rangle$. 

\noindent 2) 
$-(T-t)(x^n\frac{\partial^n}{\partial x^n})P_0$ solves:
\begin{equation}
\begin{split}
\langle \L_2 \rangle u&=\left(x^n\frac{\partial^n}{\partial x^n}\right)P_0,\quad
u(T,x,r;z)=0.
\end{split}
\end{equation}
3) By differentiating (\ref{eq:bndry-cond-P0}) with respect to $\alpha$, we see
that $-\frac{\partial P_0}{\partial \alpha}$ also solves 
\begin{equation}
\begin{split}
\langle \L_2 \rangle u&=\frac{\partial P_0}{\partial r},\quad
u(T,x,r;z)=0.
\end{split}
\end{equation}
4) Using 1) and 2) above and the equation we obtain by differentiating (\ref{eq:PDE-P0}) with respect to
$\eta$, we can show that
$1/\eta \cdot (\bar{\sigma}_1\rho_1 x\frac{\partial^2
P_0}{\partial x\partial \alpha}-\frac{\partial P_0}{\partial
\eta})$ solves
\begin{equation}
\begin{split}
\langle \L_2 \rangle u&=\frac{\partial^2 P_0}{\partial r^2},\quad
u(T,x,r;z)=0.
\end{split}
\end{equation}
\end{proof}
\begin{remark}
By differentiating (\ref{eq:PDE-P0}) with respect to $r$, we obtain
\begin{equation}
\langle \L_2\rangle \frac{\partial P_0}{\partial
r}=-x\frac{\partial}{\partial x}P_0+\beta\frac{\partial P_0
}{\partial r}+P_0.
\end{equation}
Using observation 2 in the proof of Proposition~\ref{prop:sec-term},
we see that $\frac{1}{\beta}\left(-(T-t)(x\frac{\partial
P_0}{\partial x}-P_0)+\frac{\partial P_0}{\partial r}\right)$
solves
\begin{equation}
\langle \L_2 \rangle u=\frac{\partial P_0}{\partial r}, \quad
u(T,x,r;z)=0.
\end{equation}
Now, it follows from observation 3 in the proof of Proposition~\ref{prop:sec-term} that
\begin{equation}
-\frac{\partial P_0}{\partial
\alpha}=\frac{1}{\beta}\left(-(T-t)\left(x\frac{\partial P_0}{\partial
x}-P_0\right)+\frac{\partial P_0}{\partial r}\right).
\end{equation}
Using this identity, we can express (\ref{p1}) only in terms of the ``Greeks".
\end{remark}

Next, we obtain an explicit expression for $P_{0,1}$, the solution of
 (\ref{p2Eq}). We need some preparation first. 
By differentiating (\ref{eq:PDE-P0}) with respect to $z$, we see that $\frac{\partial
P_0}{\partial z}$ solves
\begin{equation}
\langle \L_2\rangle u=-\bar{\lambda}'(z)x\frac{\partial
P_0}{\partial x}+l\,\bar{\lambda}'(z)P_0, \quad 
u(T,x,r;z)=0.
\end{equation}
As a result (see Observation 2 in the proof of Propostion~\ref{prop:sec-term})
\begin{equation}
\frac{\partial P_0}{\partial
z}=(T-t)\bar{\lambda}'(z)\left(x\frac{\partial P_0}{\partial x}-l\,P_0\right),
\end{equation}
from which it follows that $-\langle \M_1 \rangle P_0$ can be represented as
\begin{equation}
-\langle
\M_1\rangle P_0=-(T-t)\bar{\lambda}'(z)\left(\bar{\sigma}_1\rho_3g(z)\left(x^2\frac{\partial^2
P_0}{\partial x^2}+(1-l) x \frac{\partial P_0}{\partial x}\right)+\eta\rho_{13}g(z)\left(x\frac{\partial^2 P_0}{\partial x \partial r}-l \frac{ \partial P_0}{\partial r}\right)\right).
\end{equation}
\begin{prop}
The correction term $\sqrt{\delta}P_{0,1}$ is given by
\begin{equation}\label{p2}
\begin{split}
\sqrt{\delta}P_{0,1}&=V_1^\delta\frac{(T-t)^2}{2}\left(x^2\frac{\partial^2P_0}{\partial
x^2} +(1-l) x \frac{\partial P_0}{\partial x}\right)+V_2^\delta\frac{1}{\beta}\bigg[x \frac{\partial^2 P_0}{\partial \alpha \partial x}-l\,\frac{\partial
P_0}{\partial \alpha}\\
&+\frac{(T-t)^2}{2}\left(x^2\frac{\partial^2P_0}{\partial
x^2}-l\,x\frac{\partial P_0}{\partial
x}+l\,P_0\right)-(T-t)\left(x \frac{\partial^2
P_0}{\partial r \partial x}-l\,\frac{\partial P_0}{\partial r}\right)\bigg],
\end{split}
\end{equation}
in which
\begin{equation}
V_1^\delta=\sqrt{\delta}\bar{\lambda}'(z)\bar{\sigma}_1\rho_3g(z), \quad
V_2^\delta=\sqrt{\delta}\bar{\lambda}'(z)\eta\rho_{13}g(z).
\end{equation}
\end{prop}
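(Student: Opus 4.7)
The plan is to solve the inhomogeneous PDE $\langle \L_2 \rangle P_{0,1} = -\langle \M_1\rangle P_0$ with terminal condition $P_{0,1}(T,x) = 0$, taking as input the explicit expression for $-\langle \M_1\rangle P_0$ derived in the preparation just before the proposition, and mimicking the ``match an ansatz against the Greeks'' approach used in Proposition~\ref{prop:sec-term}. The right-hand side splits naturally into two pieces: one proportional to $\bar{\sigma}_1\rho_3 g(z)$, which only involves $x$-derivatives, and one proportional to $\eta\rho_{13}g(z)$, which also involves $\partial_r$. I would treat the two pieces separately and exhibit explicit particular solutions, relying on linearity of $\langle \L_2\rangle$.

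For the first piece, I would use Observation 1 from the previous proof ($x^n\partial_x^n$ commutes with $\langle \L_2 \rangle$, and $\langle \L_2\rangle P_0 = 0$), together with the product rule $\langle \L_2\rangle(f(t)g) = f'(t)g + f(t)\langle \L_2\rangle g$ when $f$ depends only on $t$. This gives $\langle \L_2\rangle\!\left(\tfrac{(T-t)^2}{2}\,x^n\partial_x^n P_0\right) = -(T-t)\,x^n\partial_x^n P_0$, which matches the $V_1^\delta$ term of the stated ansatz exactly against the corresponding half of $-\langle \M_1\rangle P_0$.

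The main obstacle is the second piece, $-(T-t)\bar{\lambda}'(z)\eta\rho_{13}g(z)\,A$ with $A := x\partial^2_{x,r}P_0 - l\,\partial_r P_0$, because $\partial_r$ does not commute with $\langle \L_2\rangle$. I would exploit two auxiliary identities already in hand: differentiating $\langle \L_2\rangle P_0 = 0$ in $\alpha$ gives $\langle \L_2\rangle\partial_\alpha P_0 = -\partial_r P_0$ (Observation 3), and the remark after Proposition~\ref{prop:sec-term} gives $\langle \L_2\rangle\partial_r P_0 = -x\partial_x P_0 + \beta\partial_r P_0 + P_0$. Combining them and using the commutation of $x\partial_x$ with $\langle \L_2\rangle$, I would compute
\begin{equation*}
\langle \L_2\rangle A \;=\; -\bigl(x^2\partial_x^2 P_0 - l\,x\partial_x P_0 + l\,P_0\bigr) + \beta A \;=:\; -B + \beta A,
\end{equation*}
where by construction $\langle \L_2\rangle B = 0$. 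This suggests the ansatz
\begin{equation*}
\Phi \;=\; x\partial^2_{\alpha,x}P_0 - l\,\partial_\alpha P_0 + \tfrac{(T-t)^2}{2}B - (T-t)A,
\end{equation*}
and a direct application of $\langle \L_2\rangle$ term-by-term (using the two identities above on the first two terms and the product rule on the last two) should yield $\langle \L_2\rangle \Phi = -\beta(T-t)A$, so that $V_2^\delta\,\beta^{-1}\Phi$ supplies the needed second half of $\sqrt{\delta}P_{0,1}$. The appearance of $\beta^{-1}$ in the statement is precisely the price of inverting the $\beta A$ term coming from the commutator of $\partial_r$ with $-\beta r\,\partial_r$.

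Finally, I would verify the terminal condition: at $t=T$ the three $(T-t)$-weighted terms in $\Phi$ vanish, and since $P_0(T,x,r,z)=h(x)$ is independent of $\alpha$ and $r$, the remaining $x\partial^2_{\alpha,x}P_0 - l\,\partial_\alpha P_0$ also vanishes; the $V_1^\delta$ piece vanishes trivially because of its $(T-t)^2$ prefactor. By uniqueness for the (non-degenerate) parabolic problem $\langle \L_2\rangle u = -\langle \M_1\rangle P_0$, $u(T,\cdot)=0$, the constructed function equals $\sqrt{\delta}P_{0,1}$, proving (\ref{p2}). The bookkeeping of the $\partial_\alpha$/$\partial_r$/$(T-t)$ combination is the only non-routine step; everything else is linearity and the two commutation identities already established.
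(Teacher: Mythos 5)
Your proof is correct and uses essentially the same ingredients as the paper: the observation that $\tfrac{(T-t)^2}{2}x^n\partial_x^n P_0$ inverts $-(T-t)x^n\partial_x^n P_0$, the identities obtained by differentiating $\langle\L_2\rangle P_0=0$ with respect to $\alpha$ and $r$, and the commutation of $x\partial_x$ with $\langle\L_2\rangle$. The only difference is organizational: the paper first exhibits a particular solution to $\langle\L_2\rangle u=(T-t)\partial_r P_0$ (its Observation 2) and then stitches the pieces together by superposition, whereas you package the mixed-derivative part into $A$ and $B$, compute $\langle\L_2\rangle A=-B+\beta A$, and cancel; the two constructions are equivalent, and in fact your $\tfrac{1}{\beta}\Phi$ equals $-x\partial_x u + l u$ with $u$ the paper's auxiliary solution.
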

\begin{proof}
We construct the solution from the following observations and superposition since $\langle \L_2 \rangle$ is linear:

\noindent 1)
We first observe that
$\frac{(T-t)^2}{2}(x^n\frac{\partial^n}{\partial x^n})P_0$ solves
\begin{equation}
\langle \L_2 \rangle u=-(T-t)\left(x^n\frac{\partial^n}{\partial
x^n}\right)P_0, \quad 
u(T,x,r;z)=0.
\end{equation}

\noindent 2) Next, we apply $\langle \L_2 \rangle$ on $(T-t)\frac{\partial
P_0}{\partial r}$ and obtain
\begin{equation}
\langle \L_2 \rangle\left((T-t)\frac{\partial P_0}{\partial
r}\right)=-\frac{\partial P_0}{\partial r}+(T-t)\left(-x\frac{\partial P_0
}{\partial x}+\beta\frac{\partial P_0 }{\partial r}+P_0\right),
\end{equation}
as a result of which we see that
\begin{equation}
\frac{1}{\beta}\left[-\frac{\partial P_0}{\partial
\alpha}-\frac{(T-t)^2}{2}\left(x\frac{\partial P_0}{\partial
x}-P_0\right)+(T-t)\frac{\partial P_0}{\partial r}\right]
\end{equation}
solves
\begin{equation}
\langle \L_2\rangle u=(T-t)\frac{\partial P_0}{\partial r}, \quad
u(T,x,r;z)=0.
\end{equation}
\end{proof}

\section{Calibration of the Model}\label{sec:calibration}

 In this section, we will calibrate the loss rate $l$ and the parameters 
\[
\{\bar{\lambda}, V_1^\epsilon,V_2^\epsilon,V_3^\epsilon,V_4^\epsilon,V_5^\epsilon,V_6^\epsilon,V_1^\delta,V_2^\delta\},
\]
which appear in the expressions (\ref{gp}), (\ref{p1}), and (\ref{p2}) on a daily basis (see, e.g., \cite{ronnie-timescale} and \cite{papa} for similar calibration exercises carried out only for the option data or only for the bond data).
We demonstrate this calibration on Ford Motor
Company. Note that there are some common parameters between equity options and corporate bonds. Therefore, our model will be calibrated simultaneously to both of these data sets. We will also calibrate the parameters of the interest rate and stock models to the yield curve data, historical spot rate data and historical stock price data.
Next, we test our model by using the estimated parameters to
construct an out-of-sample  CDS spread time series (3 year and 5 year), which
matches real quoted CDS spread data over the time
period ($1/6/2006-6/8/2007$) quite well. 

We also look at how our model-implied volatility matches the real option implied volatility.
We compare our results against those of \cite{ronnie-timescale}. We see that even when we make the
unrealistic assumption of constant volatility, our model is able to produce a very good fit. 

Finally, in the context of index options (when $\lambda=0$), using SPX 500 index
options data,  we show the importance of accounting for stochastic interest rates by comparing our model to that of \cite{sircar, ronnie-timescale}.

\subsection{Data description}\label{sec:data-desc}
\begin{itemize}
\item The daily closing stock price data is obtained from finance.yahoo.com.

\item The stock option data is from OptionMetrics under WRDS database, which is the same database used in \cite{carr-wu}.
\begin{itemize}
\item For index options, SPX 500 in our case, we use the data from their
Volatility Surface file. The file contains
information on standardized options, both calls and puts, with
expirations of 30, 60, 91, 122, 152, 182, 273, 365, 547, and 730
calender days. Implied volatilities there are interpolated data
using a methodology based on kernel smoothing algorithm. The
interpolated implied volatilities are very close to real data
because there are a great number of options each day for SPX 500 with
different maturities and strikes. The calibration results for index options are presented in Figure~\ref{fig:spx} and only the data set on the June 8, 2007 is used.

\item On September 15, 2006 (Friday) Ford announced that it would not be paying dividends (see e.g., {http://money.cnn.com/2006/09/15/news/companies/ford/index.htm}).  Therefore, call options on Ford after that date do not have early exercise premium starting from Sep 18, 2006. We use Ford's implied volatility surface data from 9/18/2006 to 6/8/2007. 
when we consider Ford Motor Company's options. We excluded the
observations with zero trading volume or with maturity less than 9 days.
The calibration results are used to construct Figures~\ref{fig:cds3yr}-\ref{fig:impvfouque}. In particular, the implied volatility surface data from 9/18/2006 to 6/8/2007 is used to construct \ref{fig:cds5yr}. 

\hspace{0.1in}As opposed to the options on the index there are not as many 
individual company options; and we find that the results given by
using interpolated implied volatilities in the Volatility Surface File and data implied volatilities differ.
This may be due to the fact that there are a limited number of option prices available for individual
companies; i.e., there may not be enough data points for the implied volatilities to be accurately
interpolated. Therefore, we use the Option Price file, which contains the historical option price information, of the OptionMetrics database \end{itemize}

\item For each day we U.S government Treasury yield data with maturities: 1 month, 3 monts, 6 months, 1 year, 2 years, 3 years, 5 years, 7 years, 10 years, 20 years.
This data set is available at: {www.treasury.gov/offices/domestic-finance/debt-management/interest-rate/yield.shtml.}

\item Corporate bond and CDS data is obtained from Bloomberg. Number of available bond quotes and bond maturities vary. Typically there are around 15 data points, for example, on June 8th, we have the following maturities: 0.60278,1.0222,1.1861,1.3139,1.4083, \newline 1.5944, 2.3889, 2.6028, 3.0194, 3.2694, 3.3972, 3.6472, 4.1722, 4.3806, 6.3139, 9.5194.

\end{itemize}

\subsection{The parameter estimation}\label{sec:parest}

The following parameters can be directly estimated from the spot-rate and stock price historical data:
\begin{enumerate}
\item The parameters of the interest rate model $\{\alpha,\beta,\eta\}$ are obtained by a  least-square fitting to the
Treasury yield curve as in \cite{papa}.
\item $\bar{\rho}_1=\frac{\bar{\sigma}_1}{\bar{\sigma}_2} \rho_1$, the ``effective" correlation between risk-free
spot rate $r$ (we use the one-month treasury bonds as a proxy for $r$) and stock price in (\ref{eq:effective}) is estimated from historical risk-free spot rate and
stock price data.
\item $\bar{\sigma}_2$, the ``effective" stock price volatility in (\ref{eq:effective})
is estimated from the historical stock price data.
\end{enumerate}

Now, we detail the calibration method for $l$, $\bar{\lambda}(z)$ and
$\{V_1^\epsilon,V_2^\epsilon,V_3^\epsilon,V_4^\epsilon,V_5^\epsilon,V_6^\epsilon,V_1^\delta,V_2^\delta\}$. We will minimize the in-sample quadratic pricing error using non-linear least squares to calibrate these parameters on a daily basis. This way we find a risk neutral model that matches a set of observed market prices. This risk neutral model can then be used to price more exotic, illiquid or over-the-counter derivatives. This practice is commonly employed; and for further discussion of this calibration methodology we refer to \cite{cont} (see Chapter 13 and the references therein).

Our  calibration is carried out in two steps in tandem:

\noindent \textbf{Step 1. Estimation of $l\bar{\lambda}$ and
$\{lV_3^\epsilon,lV_2^\delta\}$ from the corporate bond price data.}

The approximate price formula in (\ref{eq:apprx}) for a defaultable bond is
\begin{equation}\label{eq:apprx-bndprice}
\widetilde{B}^c=B_0^c+\sqrt{\eps}B^c_{1,0}+\sqrt{\delta}B^c_{0,1},
\end{equation}
in which $B_0^c$ is given by (\ref{btt}) and
\begin{equation}
\begin{split}
\sqrt{\epsilon}B^c_{1,0}&=lV_3^\epsilon \frac{\partial B^c_0}{\partial \alpha},
\\ \sqrt{\delta}B^c_{0,1}&=lV_2^\delta\frac{1}{\beta}\left[-\frac{\partial
B^c_0}{\partial \alpha}
+\frac{(T-t)^2}{2}B^c_0+(T-t)\frac{\partial
B^c_0}{\partial r}\right].
\end{split}
\end{equation}

 We obtain $\{l \bar{\lambda}(z),l V_{3}^{\eps}, l V_{2}^{\delta}\}$ from least-squares fitting, i.e. by minimizing
\begin{equation}\label{eq:bond-lls}
\displaystyle\sum_{i=1}^n(B^c_{\text{obs}}(t,S_i)-B^c_{\text{model}}(t,S_i;l\bar{\lambda},lV_3^\epsilon,lV_2^\delta))^2,
\end{equation}
where $B^c_{\text{obs}}(t,S_i)$ is the observed market price of a
bond that matures at time $S_i$ and $B^c_{\text{model}}(t,S_i;l\bar{\lambda},lV_3^\epsilon,lV_2^\delta)$ is the corresponding model price obtained from (\ref{eq:apprx-bndprice}). Here, $n$ is the number of bonds that are traded at time $t$.
For a fixed value of $l \bar{\lambda}(z)$
 it follows from (\ref{eq:apprx-bndprice}) that $\{l V_{3}^{\eps}, l V_{2}^{\delta}\}$ can be determined as the least squares solution of
\begin{equation*}
\begin{pmatrix}
\frac{\partial B^c_0}{\partial \alpha}(t,S_1),&
\frac{1}{\beta}\left[-\frac{\partial B^c_0}{\partial \alpha}
+\frac{(S_1-t)^2}{2}B^c_0+(S_1-t)\frac{\partial B^c_0}{\partial r}\right] 
\\ \vdots & \vdots
\\ \frac{\partial B^c_0}{\partial \alpha}(t,S_n),&
\frac{1}{\beta}\left[-\frac{\partial B^c_0}{\partial \alpha}
+\frac{(S_n-t)^2}{2}B^c_0+(S_n-t)\frac{\partial B^c_0}{\partial r}\right] 
\end{pmatrix}
\begin{pmatrix}
lV_3^\epsilon\\
lV_2^\delta
\end{pmatrix}=
\begin{pmatrix}
B^c_{\text{obs}}(t,S_1)-B_0^c(t,S_1;l\bar\lambda)
\\ \vdots 
\\ B^c_{\text{obs}}(t,S_n)-B_0^c(t,S_n;l\bar\lambda)
\end{pmatrix}.
\end{equation*}
Now, we vary $l\bar{\lambda}(z) \in [0,M_1]$ and choose the point
$\{l\bar{\lambda},lV_3^\epsilon,lV_2^\delta\}$ that minimizes (\ref{eq:bond-lls}). Here, we take $M_1=1$ guided by the results of \cite{papa}.

\noindent \textbf{Step 2. Estimation of
$\{l, V_1^\epsilon,V_2^\epsilon,V_4^\epsilon,V_5^\epsilon,V_6^\epsilon,V_1^\delta\}$ from the equity option data}:
These parameters are calibrated from the stock options data by a least-squares fit to the observed implied volatility. We choose the parameters to minimize
\begin{equation}\label{eq:impliedvolfit}
\begin{split}
&\displaystyle\sum_{i=1}^n(I_{\text{obs}}(t,T_i,K_i)-I_{\text{model}}(t,T_i,K_i;\text{model
parameters} ))^2\\
&\hspace{1.6in} \approx\displaystyle\sum_{i=1}^n\frac{(P_{\text{obs}}(t,T_i,K_i)-P_{\text{model}}(t,T_i,K_i;\text{model
parameters} ))^2}{\text{vega}^2(T_i,K_i)}
\end{split}
\end{equation}
in which $I_{\text{obs}}(t,T_i,K_i)$ and
$I_{\text{model}}(t,T_i,K_i;\text{model parameters})$ are observed
Black-Scholes implied volatility and model Black-Scholes implied
volatility, respectively. The right hand side of (\ref{eq:impliedvolfit}) is from \cite{cont}, page 439.
Here,
$P_{\text{obs}}(t,T_i,K_i)$ is the market
price of a European option (a put or a call) that matures at time
$T_i$ and with strike price $K_i$ and
$P_{\text{model}}(t,T_i,K_i;\text{model parameters})$ is the
corresponding model price which is obtained from (\ref{eq:apprx}). As in \cite{cont}, $\text{vega}(T_i,K_i)$ is the market implied
Black-Scholes vega. 

Let $P_0(t,T_i,K_i;\bar{\lambda}(z))$ be either of (\ref{eq:call-opt}) and (\ref{eq:put-opt}) with $K=K_i$ and $T=T_i$.
Let us introduce the Greeks,
\begin{equation}\label{eq:greeks}
\begin{split}
g_1&=-(T-t)x^2\frac{\partial^2P_0}{\partial
x^2},
\quad g_2=-(T-t) x\frac{\partial}{\partial
x}\left(x^2\frac{\partial^2P_0}{\partial x^2}\right), \quad
g_3=\frac{\partial}{\partial
\alpha}\left(x\frac{\partial P_0}{\partial x}-P_0\right),\\
g_4&=x^2\frac{\partial^3P_0}{\partial x^2 \partial \alpha}, \quad
g_5=x\frac{\partial^2 P_0}{\partial \eta \partial x}, \quad
g_6=x\frac{\partial^2 P_0}{\partial
\alpha \partial x},\quad
g_7=\frac{(T-t)^2}{2}x^2\frac{\partial^2P_0}{\partial x^2},\\
g_8&=\frac{1}{\beta}\left[x\frac{\partial^2
P_0}{\partial \alpha \partial x}-\frac{\partial P_0}{\partial \alpha}
+\frac{(T-t)^2}{2}\left(x^2\frac{\partial^2P_0}{\partial
x^2}-x\frac{\partial P_0}{\partial
x}+P_0\right)-(T-t)\left(x(\frac{\partial^2
P_0}{\partial r} \partial x)-\frac{\partial P_0}{\partial r}\right)\right],
\end{split}
\end{equation}
in which each term can be explicitly evaluated (see Appendix). 

Now from (\ref{eq:apprx}) and the results of Section~\ref{sec:explct} (with $l=1$), we can write
\begin{equation}\label{eq:7param-model}
\begin{split}
P_{\text{model}}(t,T_i,K_i;\text{model
parameters})
&=P_0(t,T_i,K_i;\bar\lambda)
+V_1^\epsilon g_1(T_i,K_i;\bar\lambda)+V_2^\epsilon
g_2(T_i,K_i;\bar\lambda)\\&+V_3^\epsilon
g_3(T_i,K_i;\bar\lambda)
+V_4^\epsilon g_4(T_i,K_i;\bar\lambda)+V_5^\epsilon
g_5(T_i,K_i;\bar\lambda)
\\&+V_6^\epsilon g_6(T_i,K_i;\bar\lambda)
+V_1^\delta g_7(T_i,K_i;\bar\lambda)+V_2^\delta
g_8(T_i,K_i;\bar\lambda).
\end{split}
\end{equation}
First, let us fix the value of $l$. Then, from Step 1, we can infer the values of $\{\bar\lambda, V_3^\epsilon,V_2^\delta\}$. Now
the fitting problem in (\ref{eq:impliedvolfit}) is a linear least squares problem for
$\{V_1^\epsilon,V_2^\epsilon,V_4^\epsilon,V_5^\epsilon,V_6^\epsilon,V_1^\delta\}$.
Next, we vary $l \in [0,1]$ and choose 
$\{l,V_1^\epsilon,V_2^\epsilon,V_4^\epsilon,V_5^\epsilon,V_6^\epsilon,V_1^\delta\}$ so that (\ref{eq:impliedvolfit}) is minimized. 

\subsection{Model implied CDS spread matches the observed CDS spread}

Let $\widetilde{B}^c(t,T;l)$ denote the approximation for the
price at time $t$ of a defaultable bond that matures at
time $T$, and has loss rate $l$ (see (\ref{eq:apprx-bndprice})). Let $B(t,T)$ be the price
of a risk-free bond. Then, the model implied CDS spead with maturity $T_M$ is
\begin{equation}\label{eq:imcdssp}
c^{ds}_{\text{model}}(t,T_M)=\frac{B(t,T_M)-\widetilde{B}^c(t,T_M;l)}{\displaystyle\sum_{m=1}^M
\widetilde{B}^c(t,T_m;1)}.
\end{equation}
Recall that we have already estimated all of the model parameters in Section~\ref{sec:parest} using both corporate term structure data and the stock option implied volatility surface. Therefore, using  (\ref{eq:imcdssp}) we can plot the model implied CDS spread over time and compare it with the CDS spread data available in the market. This is precisely what we do in Figures~\ref{fig:cds3yr} and \ref{fig:cds5yr}. We look at the time series $c_{\text{model}}^{ds}(t,3)$ and $c_{\text{model}}^{ds}(t,5)$ and compare them to the CDS spread time series of the Ford Motor Company. The match seems to be extremely good, which attests to the power of our modeling framework.

By varying $T_M$ in (\ref{eq:imcdssp}) we can obtain the model implied term structure of the CDS spread. Figure~\ref{fig:cdssamples} shows the range of shapes we can produce.

\subsection{Fitting Ford's implied volatility}\label{sec:Fordsimplied}

We will compare how well our model fits the implied volatility against the model of \cite{ronnie-timescale}, which does not account for the default risk and for the randomness of the interest rates.
Although, we only calibrate seven parameters (hence we refer to our model as the 7-parameter model) to the option prices (see the second step of the estimation in Section~\ref{sec:parest}), we have many more parameters than the model of \cite{ronnie-timescale}, which only has four parameters (we refer to this model as the 4-parameter model). Therefore, for a fair comparison, we also consider a model in which the volatility is a constant. In this case, as we shall see below, there are only three parameters to calibrate to the option prices, therefore we call it the 3-parameter model.

\noindent \textbf{Constant Volatility Model}
 In this case, we take $\bar{\sigma}_1=\bar{\sigma}_2=\sigma$ in the expression for $P_0$ in Corollary~\ref{cor:p0} .
The expression for $\sqrt{\delta}P_{0,1}$ remains the same as before. However, $\sqrt{\epsilon}
P_{1,0}$ simplifies to
\begin{gather}
\sqrt{\epsilon} P_{1,0}=-(T-t)V_1^\epsilon
x^2\frac{\partial^2P_0}{\partial x^2}+V_3^\epsilon
\left(-x \frac{\partial^2 P_0}{\partial
\alpha \partial x}+\frac{\partial P_0}{\partial \alpha}\right).
\end{gather}
This model has only three parameters, $l, V_1^{\eps}, V_{1}^{\delta}$ that need to be calibrated to the options prices, as opposed to the 4-parameter model of \cite{ronnie-timescale}.

As it can be seen from Figure~\ref{fig:impvolFord} as expected our 7-parameter model outperforms the 
4-parameter model of \cite{ronnie-timescale} as expected and fits the implied volatility data well. But, what is surprising is that the 3-parameter model, which does not account for the volatility but accounts for the default risk and stochastic interest rate, has almost the same performance as the 7-parameter model.

The 7-parameter model has a very rich implied volatility surface structure, the surface has more curvature than that of the 4-parameter model of \cite{ronnie-timescale}, whose volatility  surface is more flat; see Figures~\ref{fig:imp7p} and \ref{fig:impvfouque}. (The parameters to draw these figures are obtained by calibrating the models to the data implied volatility surface on June 8 2007.)
The 7-parameter model has a recognizable skew even for longer maturities and has a much sharper skew for shorter maturities.

\subsection{Fitting the implied volatility of the index options}\label{sec:demnst}
The purpose of this section is to show the importance of accounting for stochastic interest rates in fitting the implied volatility surface. Interest rate changes should, indeed, be accounted for in pricing long maturity options. When we price index options, we set $\bar{\lambda}=0$ and our approximation in (\ref{eq:apprx})
simplifies to
\begin{equation}\label{eq:indx}
P^{\epsilon,\delta}\approx P_0+\sqrt{\epsilon} P_{1,0},
\end{equation}
in which 
$P_0$ is given by Corollary~\ref{cor:p0} after settiing $\bar{\lambda}(z)=0$, and
\begin{equation}
\sqrt{\epsilon}P_{1,0}=-(T-t)\left(V_1^\epsilon
x^2\frac{\partial^2P_0}{\partial x^2}+V_2^\epsilon
x\frac{\partial}{\partial x}\left(x^2\frac{\partial^2P_0}{\partial
x^2}\right)\right)+V_4^\epsilon x^2\frac{\partial^3P_0}{\partial x^2 \partial \alpha}+V_5^\epsilon
x \frac{\partial^2 P_0}{\partial
\eta \partial x}+V_6^\epsilon x\frac{\partial^2
P_0}{\partial \alpha \partial x}
\end{equation}
Note that the difference of (\ref{eq:indx}) with the model of \cite{ronnie-timescale} is that the latter allows for a slow evolving volatility factor to better match the implied volatility at the longer maturities. This was an improvement on the model of \cite{sircar}, which only has a fast scale component in the volatility model.
We, on the other hand, by accounting for stochastic interest rates, capture the same performance by using only a fast scale volatility model.

From Figure~\ref{fig:spx}, we see that both (\ref{eq:indx}) and \cite{ronnie-timescale} outperform the model of \cite{sircar}, especially at the longer maturities ($T=$ 9 months, 1 year, 1.5 years and 2 years), and their performance is very similar. This observation emphasizes the importance of accounting for  stochastic interest rates for long maturity contracts. 

\section*{Appendix: Explicit formulae for the Greeks  in (\ref{eq:greeks})}
When $h(x)=(x-K)^+$, we can explicitly express the Greeks in (\ref{eq:greeks}) 
in terms of
$
f(x)=\frac{1}{\sqrt{2\pi}}\exp(-x^2/2)
$ 
as
\begin{align*}
&x^2\frac{\partial^2C_0}{\partial
x^2}=\frac{xf(d_1)}{\sqrt{v_{t,T}}},
\quad x\frac{\partial}{\partial
x}\left(x^2\frac{\partial^2C_0}{\partial
x^2}\right)=\frac{xf(d_1)}{\sqrt{v_{t,T}}}\left(1-\frac{d_1}{\sqrt{v_{t,T}}}\right),\\
&\frac{\partial}{\partial \alpha}\left(x\frac{\partial
C_0}{\partial x}-C_0\right)=-K\bar
B^c(t,T)\left(\frac{T-t}{\beta}+\frac{\exp(-\beta(T-t))-1}{\beta^2}\right)\left(N(d_2)-\frac{f(d_2)}{\sqrt{v_{t,T}}}\right),\\
&\frac{\partial}{\partial
\alpha}\left(x^2\frac{\partial^2C_0}{\partial
x^2}\right)=\frac{-xf(d_1)d_1}{v_{t,T}}\left(\frac{T-t}{\beta}+\frac{\exp(-\beta(T-t))-1}{\beta^2}\right),\\
&x\frac{\partial}{\partial x}\left(\frac{\partial C_0}{\partial
\alpha}\right)=\frac{xf(d_1)}{\sqrt{v_{t,T}}}\left(\frac{T-t}{\beta}+\frac{\exp(-\beta(T-t))-1}{\beta^2}\right),\\
&\frac{\partial}{\partial r}\left(x\frac{\partial C_0}{\partial
x}-C_0\right)=-K\bar
B^c(t,T)\left(\frac{1-\exp(-\beta(T-t))}{\beta}\right)\left(N(d_2)-\frac{f(d_2)}{\sqrt{v_{t,T}}}\right),\\
&x\frac{\partial}{\partial x}\left(\frac{\partial C_0}{\partial
\eta}\right)=xf(d_1)\bigg[-\frac{1}{\sqrt{v_{t,T}}}\bigg(\frac{\eta}{\beta^2}(T-t)+\frac{2\eta}{2\beta^3}(\exp(-\beta(T-t))-1\bigg)-\frac{\eta}{2\beta^3}(\exp(-2\beta(T-t))-1))\\
&+\bigg(-\frac{1}{2}\log\left(\frac{x}{K\bar{B}_{t,T}^c}v_{t,T}^{-3/2}+\frac{1}{4\sqrt{v_{t,T}}}\right)\bigg) \times \\
&\bigg(\left(\frac{2\bar\rho_1\bar\sigma_2}{\beta}+\frac{2\eta}{\beta^2}\right)(T-t)+\left(\frac{2\bar\rho_1\bar\sigma_2}{\beta^2}+\frac{4\eta}{\beta^3}\right)\exp(-\beta(T-t))-\frac{\eta}{\beta^3}\exp(-2\beta(T-t))-\left(\frac{2\bar\rho_1\bar\sigma_2}{\beta^2}+\frac{3\eta}{\beta^3}\right)\bigg)\bigg].
\end{align*}

\bibliographystyle{dcu}
{\small
\bibliography{references}}

\newpage

\begin{figure}[ht]
\begin{center}
\includegraphics[width = 1\textwidth,height=5.8cm]{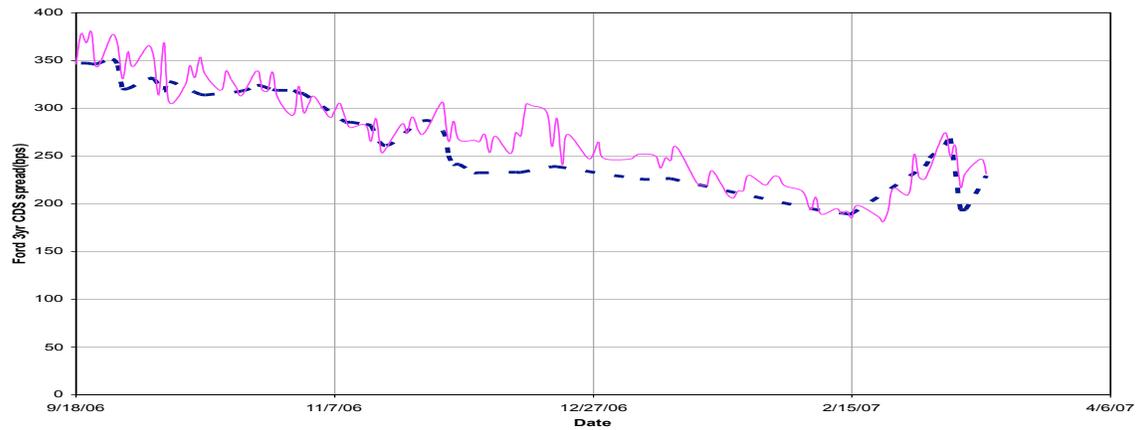}
\caption{Ford 3 year CDS annual spread time series from 9/18/2006-3/13/2007. \newline Spread implied by model is pink
solid line, real quoted spread is blue broken line. Ford's 3 year CDS spread time series is not available from 3/13/2007 until 8/31/07 in our data source.}
\label{fig:cds3yr}
\end{center}
\end{figure}

\begin{figure}[hb]
\begin{center}
\includegraphics[width = 1\textwidth,height=6cm]{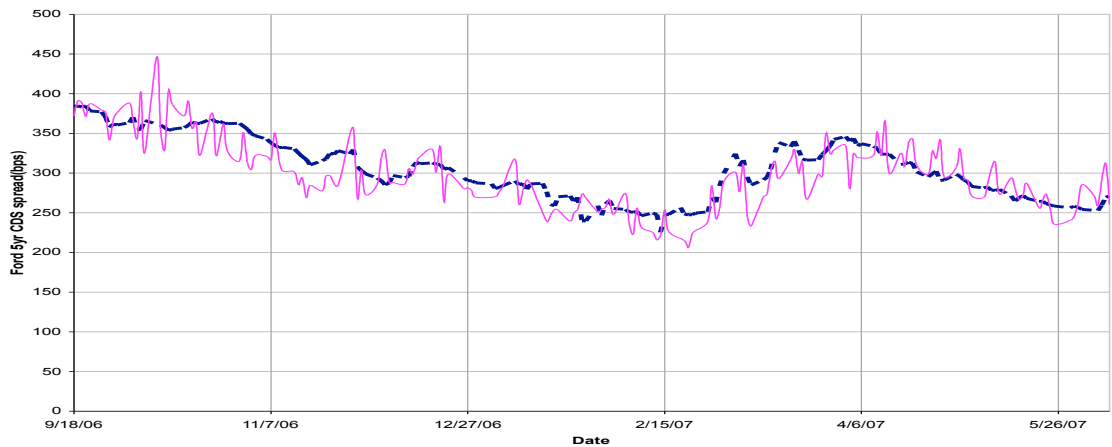}
\caption{Ford 5 year CDS annual spread time series from 9/18/2006-6/8/2007. \newline Spread implied by model is pink
solid line, real quoted spread is blue broken line.}
\label{fig:cds5yr}
\end{center}
\end{figure}

\begin{figure}[hb]
\begin{center}
\includegraphics[width = 1\textwidth,height=10cm]{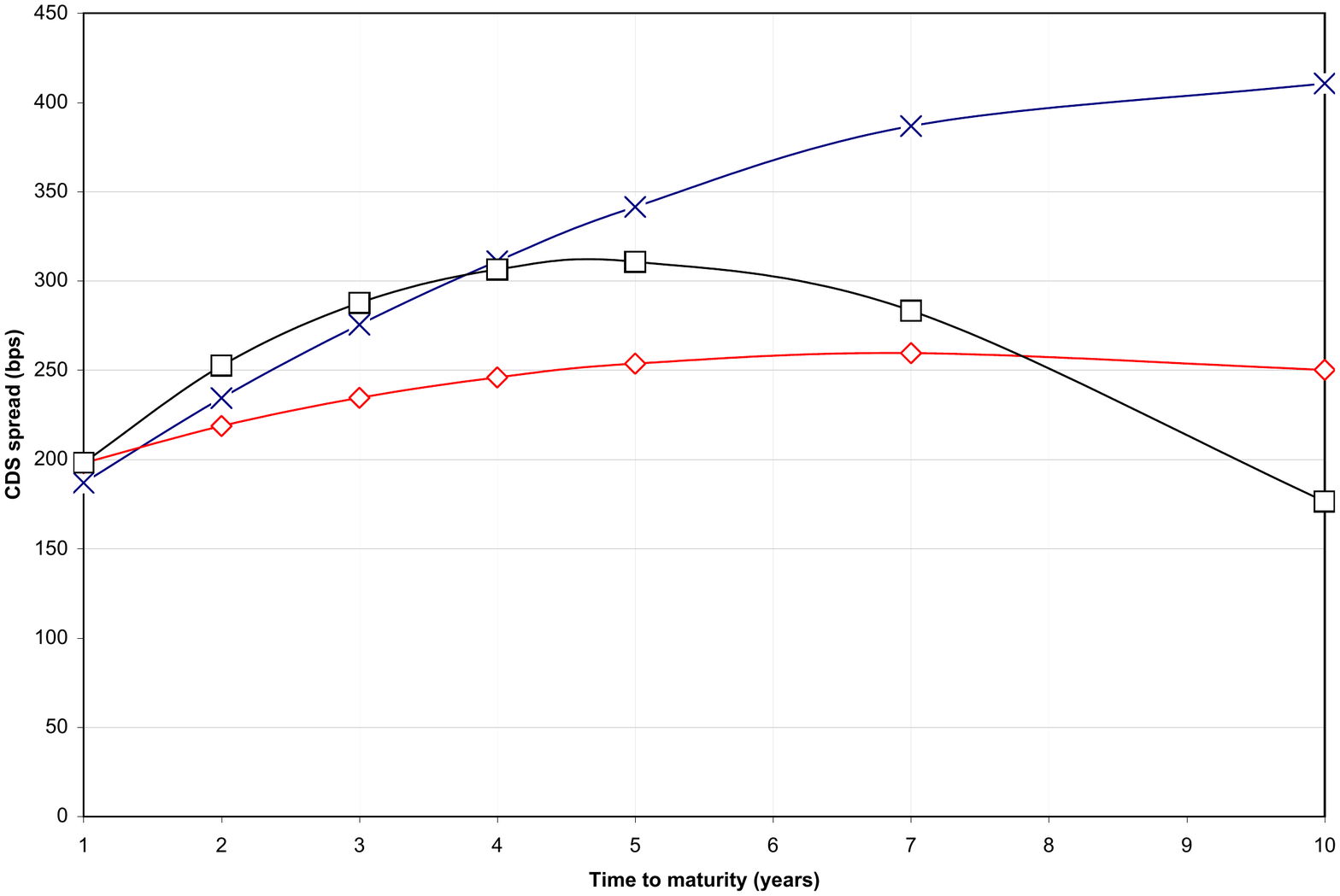}
\caption{CDS Term Structures (\ref{eq:imcdssp}) can produce:
\newline \textbf{Legend} \newline
-x-, blue (The parameters are obtained from calibration to 11/13/2006): 
   $\alpha$=0.0037, $\beta$=0.0872
   $\eta=0.0001$,
   $r=0.0516$,
   $l(\text{loss rate})=0.283$,
   $\bar{\lambda}(z)=0.0459$,
   $[V_3^\epsilon,V_2^\delta]= [0.0425,0.0036]$. \newline \newline
   -squares-, black (The parameters correspond to 6/18/2006):
  $ \alpha=0.0045$, $\beta=0.0983$,
   $\eta=0.0002$,
   $r=0.0516$,
   $l=1$, 
   $\bar{\lambda}=0.012$,
   $[V_3^\epsilon,V_2^\delta]=[0.0185,0.0025]$,
  \newline \newline
   -diamonds-, red (The parameters correspond to 9/22/2006):
   $\alpha=0.0039$,  $\beta=0.0817$,
   $\eta=0.0012$, 
   $r=0.0496$,
   $l$=1,
   $\bar{\lambda}(z)=0.017$,
   $[V_3^\epsilon,V_2^\delta]=[0.0067,0.0005]$
}
\label{fig:cdssamples}
\end{center}
\end{figure}

\begin{figure}[hb]
\begin{center}
\includegraphics[width = 1\textwidth,height=5.5in]{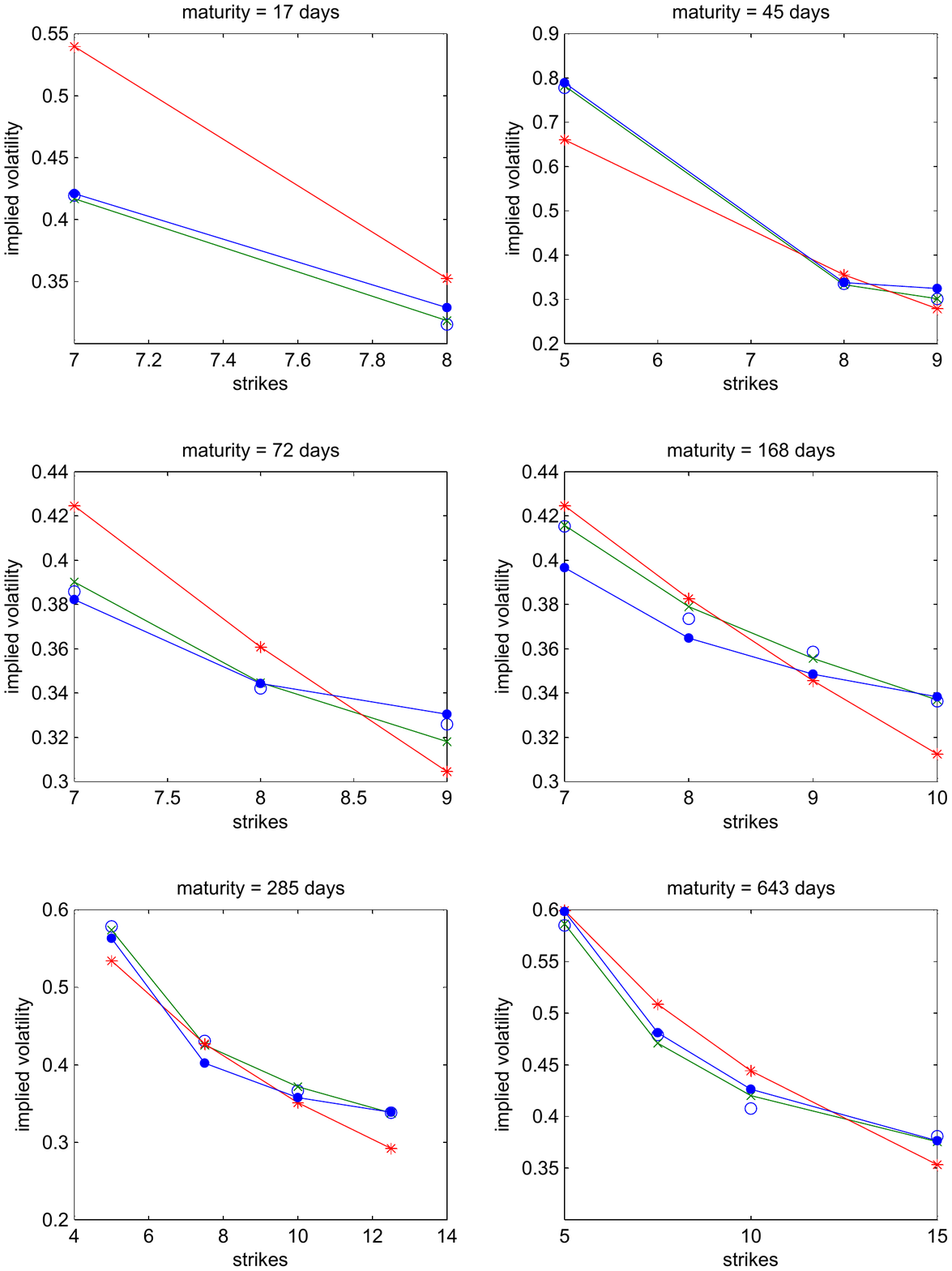}
\caption{
Implied volatility fit to the Ford call option data with maturities of [17,45,72,168,285,643] calender days on  April 4, 2007. \newline Model is calibrated aross all maturities but  we plotted the implied volatilities for each maturity, separately.
Here, stock price $(x)=8.04$, historical volatility $(\bar{\sigma}_2)=0.3827$, one month treasury rate $(r)=0.0516$, 
estimated correlation between risk-free spot rate(one month treasury) and stock price $(\bar{\rho}_1)=-0.0327$. 
Also $\alpha=0.0037$, $\beta=0.0872$, $\eta=0.0001$
which are obtained with a least-square fitting to the Treasury
yield curve on the 4th of April.
\newline
\textbf{Legend:} \newline
'o', empty circles = observed data; \newline 
'x', green = stochastic vol+stochastic hazard rate+stochastic interest rate = the 7-parameter model; \newline
small full circle, blue = constant vol+stochastic hazard rate+ stochastic interest rate = the 3-parameter model \newline
'*', red = The model of \cite{ronnie-timescale} which has constant interest rate+stochastic vol (slow and fast scales) = the 4 parameter model.
 }
\label{fig:impvolFord}
\end{center}
\end{figure}

\begin{figure}[hb]
\begin{center}
\includegraphics[width =0.8\textwidth,height=7.4cm]{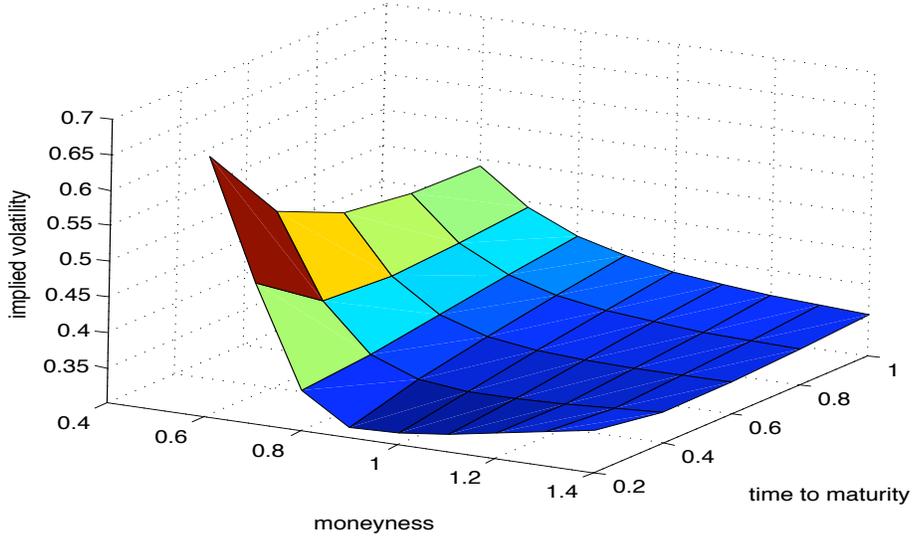}
\caption{Implied volatility surface corresponding to (\ref{eq:7param-model}), the 7-parameter model. \newline 
Here, $\alpha=0.0063$, $\beta=0.1034$, $\eta=0.012$, $r=0.0476$
 $\bar{\sigma}_2=0.2576$, $\bar{\lambda}(z)=0.027$,
 $(V_1^\epsilon,V_2^\epsilon,V_3^\epsilon,V_4^\epsilon,V_5^\epsilon,V_6^\epsilon,V_1^\delta,V_2^\delta)=(0.9960,-0.0014,0.0009,0.0104,-0.6514,0.3340,-0.1837,-0.0001)$.
}
\label{fig:imp7p}
\end{center}
\end{figure}

\begin{figure}[hb]
\begin{center}
\includegraphics[width =0.8\textwidth,height=7.4cm]{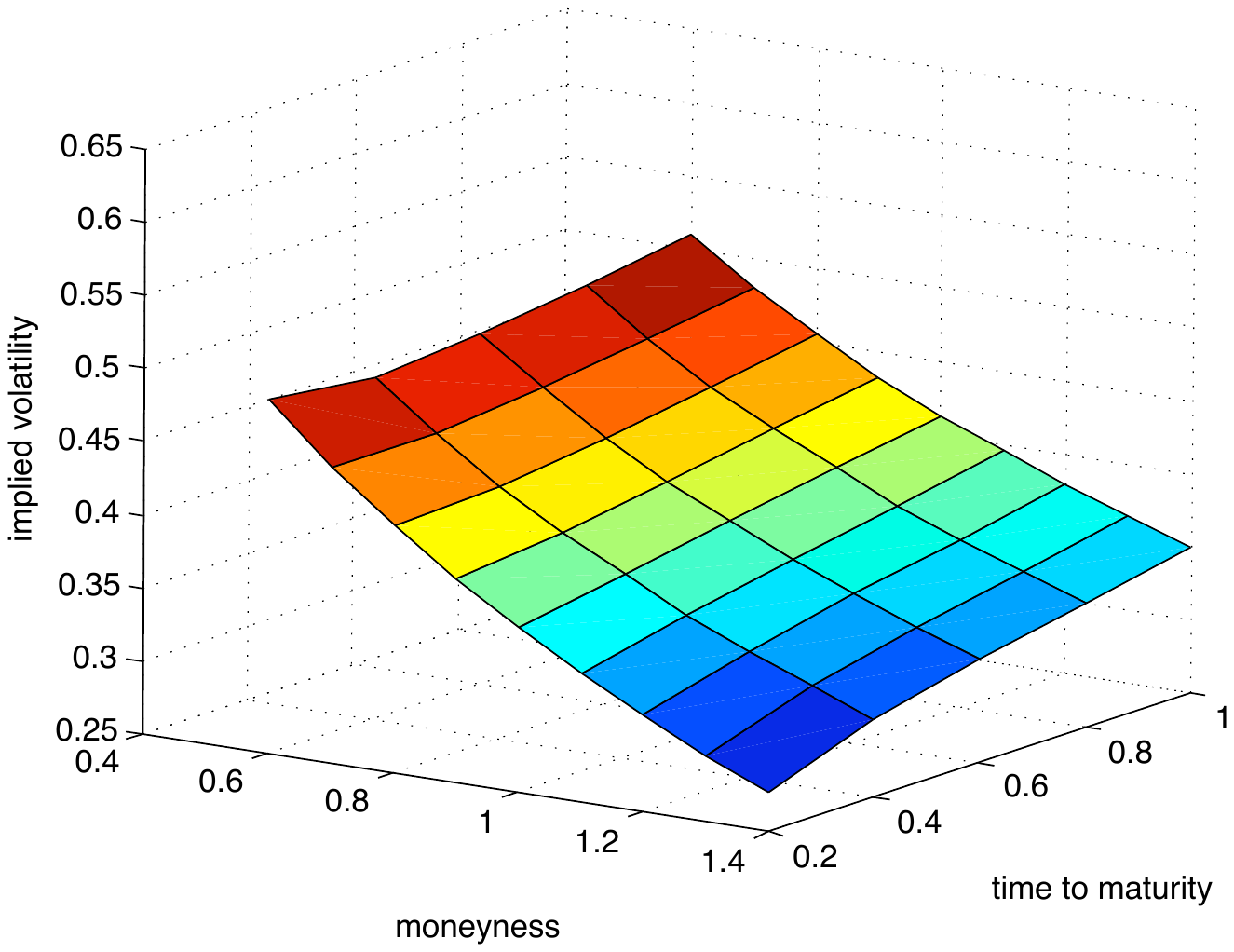}
\caption{Implied Volatility Surface corresponding to the 4-parameter model of \cite{ronnie-timescale}. \newline Here, $r=0.046$, average volatility=0.2546, and the parameters in (4.3) of \cite{ronnie-timescale} are choosen to be $(V^{\eps}_2, V^{\eps}_3,V_0^{\delta},V_1^{\delta})= (-0.0164,-0.1718,0.0006,0.0630)$. Note that the parameters here and Figure~\ref{fig:imp7p} are both obtained by calibrating the models to the data implied volatility surface of Ford Motor Company on June 8, 2007.}
\label{fig:impvfouque}
\end{center}
\end{figure}

\begin{figure}[hb]
\begin{center}
\includegraphics[width = 1\textwidth,height=6in]{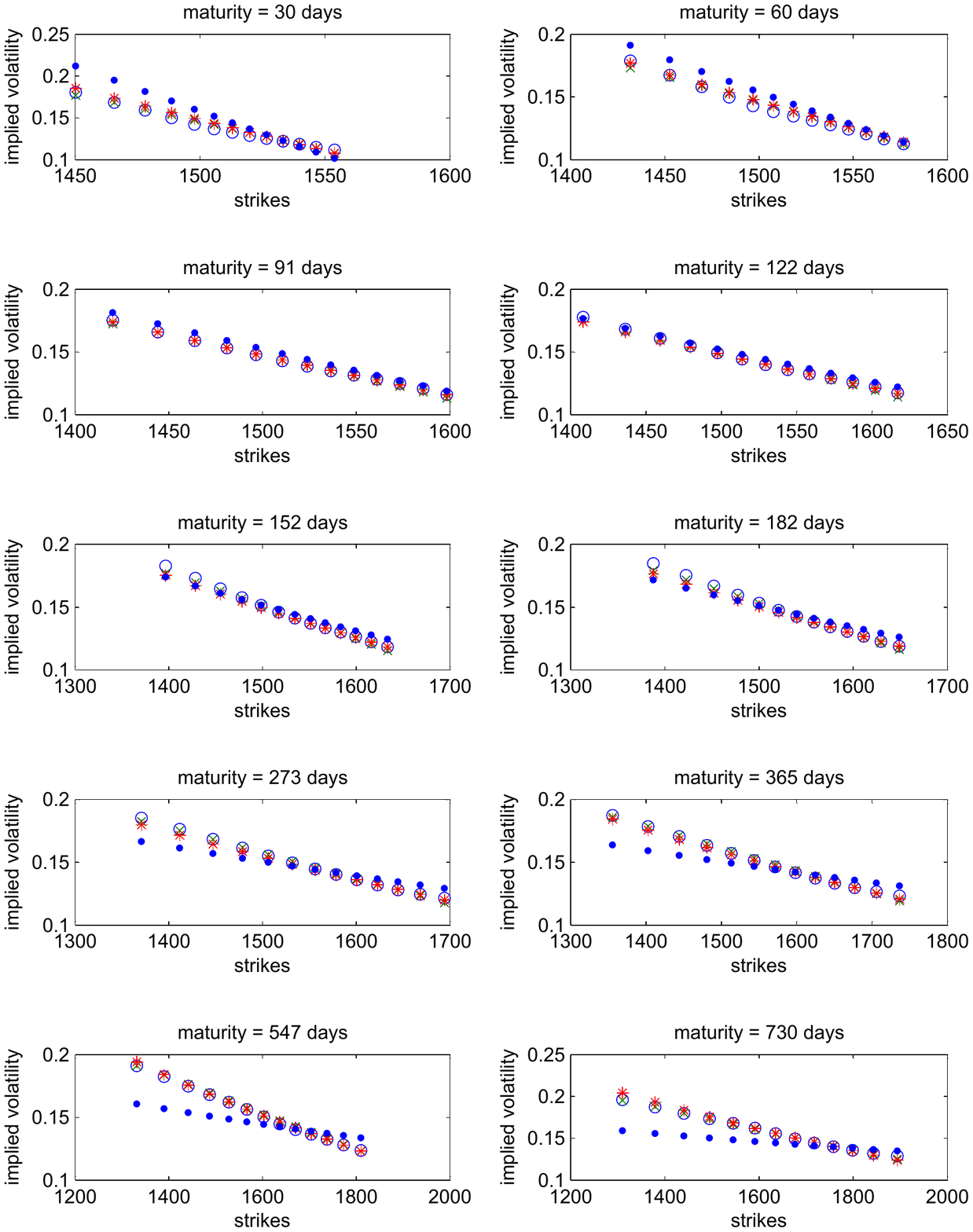}
\caption{The fit to the Implied Volatility Surface of SPX on June 8, 2007 with
maturities [30,60,91,122,152,182,273,365,547,730] calender days. Recall from Section~\ref{sec:data-desc} that we use standardized options from the OptionMetrics.
\newline
Models are calibrated aross all maturities, but we plot the implied volatility fits separately.
The parameters are:
stock price $(x)=1507.67$,
dividend rate = 0.0190422, historical volatility $(\bar{\sigma}_2)=0.1124$, one month treasury rate $(r)=0.0476$, estimated correlation between risk-free spot 
rate(one month treasury) and stock price $(\bar{\rho}_1)=0.020454$.
Also, 
$\alpha=0.0078$, 
$\beta=0.1173$,
$\eta=0.0241$, which
 are obtained from a least-square fitting to the Treasury
yield curve of the same day.
\newline
\textbf{Legend} \newline
'o', empty cirles = observed data, \newline
'x", green = Implied volatility of (\ref{eq:indx}), \newline
'*', red = Implied volatility of \cite{ronnie-timescale}, \newline
small full circle, blue = Implied volatility of \cite{sircar}.}
\label{fig:spx}
\end{center}
\end{figure}

\end{document}